\newcommand{\mynote}[1]{\color{black} #1\color{black}\null}
\newcommand{\bl}[1] {\text{\boldmath ${\lambda}$}_{#1}}
\newcommand{\EX}[1] {{\mathbb{E}}\left\{{#1}\right\}}
\newcommand{\EXs}[2] {{\mathbb{E}}_{{#1}}\!\!\left\{{#2}\right\}}
\newcommand{\tr}[1] {{\text{tr}}\left\{#1\right\}}
\def\Do{{\mathcal{D}}_{\text{ord}}}
\def\nt{n_{\text{T}}}
\def\nr{n_{\text{R}}}
\def\nmin{n_{\text{min}}}
\def\Nt{{N_{\text{T}}}}
\def\Nr{{N_{\text{R}}}}
\def\Ni{N_{\text{I}}}
\def\Ni{N_{\text{I}}}
\def\Pt{P}
\def\H{{\bf H}}
\def\y{{\bf y}}
\def\x{{\bf x}}
\def\n{{\bf n}}
\def\r{{\varrho}}
\def\I{{\bf I}}
\newcommand{\td}[1] {\tilde{#1}}
\newcommand{\mb}[1] {{\mathbf{#1}}}
\def\teq{\triangleq}
\newcommand{\diag}[1] {\text{diag}\left({#1}\right)}
\newcommand{\ith}[1]    {{#1}^{\underline{\text{th}}}}
\newtheorem{theorem}{Theorem}
\newtheorem{definition}{Definition}
\newtheorem{corollary}{Corollary}
\newtheorem{lemma}{Lemma}
\newcommand{\ac}[1]{#1}   
\newcommand{\acp}[1]{#1}   
\begin{document}




\title{MIMO Networks: the Effects of Interference
}
\author{
    Marco~Chiani,~\IEEEmembership{Senior~Member,~IEEE}, 
    Moe~Z.~Win,~\IEEEmembership{Fellow,~IEEE}, and 
    Hyundong Shin,~\IEEEmembership{Member,~IEEE}.\\
    \vspace{2cm} 
    \underline{Corresponding Address:}\\
    Marco Chiani\\
    DEIS, University of Bologna\\
    V.le Risorgimento 2, 40136 Bologna, ITALY\\
    \bigskip
    Tel: +39-0512093084 \qquad Fax: +39-0512093540\\
    e-mail: {\tt marco.chiani@unibo.it}
\thanks{This research was supported, in part, by
         the European Commission in the scope of the FP7 project CoExisting Short Range Radio
        by Advanced Ultra-WideBand Radio Technology (EUWB),
       the National Science Foundation under Grants ECCS-0636519 and ECCS-0901034,
       the Office of Naval Research Presidential Early Career Award for Scientists and Engineers (PECASE) N00014-09-1-0435,
    the MIT Institute for Soldier Nanotechnologies,
       the Korea Science and Engineering Foundation (KOSEF) grant funded by the Korea government (MOST) (No. R01-2007-000-11202-0),
and the Basic Science Research Program through the National Research Foundation of Korea (NRF) funded by the Ministry of Education, Science and Technology.}
\thanks{M.\ Chiani is with WiLab/DEIS,
        University of Bologna,
        V.le Risorgimento 2,
        40136 Bologna, ITALY
        (e-mail: {\tt marco.chiani@unibo.it}).}
\thanks{M.\ Z.\ Win is with
        the Laboratory for Information and Decision
        Systems (LIDS), Massachusetts Institute of Technology,
        Room 32-D666, 77 Massachusetts Avenue, Cambridge, MA 02139
        USA (e-mail: {\tt{moewin@mit.edu}}).}
\thanks{H.\ Shin is with the
        School of Electronics and Information
        Kyung Hee University
        Yongin, Kyungki 446-701, Korea
        (e-mail: {\tt{hshin@khu.ac.kr}}). }
}
\markboth
    {Subm. to IEEE Trans. on Inf. Th.}
    {Chiani, Win, Shin: MIMO networks.}
%
\maketitle

\date{}
\thispagestyle{empty}

\newpage
\setcounter{page}{1}

\begin{abstract}
\ac{MIMO} systems  are being considered as one of the key enabling technologies for future wireless
networks. 
However, the decrease in capacity due to the presence of
interferers in \ac{MIMO} networks is not well understood.
In this paper, we develop an analytical framework to characterize
the capacity of \ac{MIMO} communication systems in the presence of
multiple \ac{MIMO} co-channel interferers and noise. We consider
the situation in which transmitters have no channel state information 
 and all links undergo Rayleigh fading.
We first generalize the determinant representation of
hypergeometric functions with matrix arguments to the case when
the argument matrices have eigenvalues of arbitrary
multiplicity.
This enables the derivation of \mynote{
the distribution of the eigenvalues of Gaussian quadratic forms and Wishart matrices with arbitrary correlation, with application to both single-user and multiuser MIMO systems. In particular, we derive} 
the ergodic mutual information for \ac{MIMO} systems in the presence of multiple \ac{MIMO}
interferers. Our analysis is valid for any number of interferers,
each with arbitrary number of antennas having possibly unequal
power levels.
This framework, therefore, accommodates the study of distributed
\ac{MIMO} systems and accounts for different spatial positions of the
\ac{MIMO} interferers.
\end{abstract}

\begin{keywords}
Eigenvalues distribution, Gaussian quadratic forms, Hypergeometric
functions of matrix arguments, Interference,  \ac{MIMO},  Wishart matrices.
\end{keywords}

\section{Introduction}\label{sec:intro}

The use of multiple transmitting and receiving antennas can
provide high spectral efficiency and link reliability for
point-to-point communication in fading environments \cite{Wint:87,
WintSalGit:94}. The analysis of capacity for \ac{MIMO} channels in
\cite{Fos:96 
} suggested practical receiver structures
to obtain such spectral efficiency. Since then, many studies have
been devoted to the 
 analysis  of \ac{MIMO} systems,
starting from \mynote{the ergodic \cite{Tel:99} and outage \cite{Chi:02} capacity for } 
uncorrelated fading to the case where correlation is
present at one of the two sides (either at the transmitter or at the
receiver) or at both sides \cite{ChiWinZan:J03, SmiRoySha:J03,
ShiWinLeeChi:J05}. The effect of time correlation is studied in
\cite{
GioSmiShaChi:J03}.

Only a few papers, by using simulation or approximations, have
studied the capacity of \ac{MIMO} systems in the
presence of cochannel interference. 
In particular, a simulation study is presented in
\cite{CatDriGre:00} for cellular systems, assuming up to $3$
transmit and $3$ receive antennas. The simulations \mynote{showed} that
cochannel interference can seriously degrade the overall capacity
when MIMO links are used in cellular networks.
In \cite{BluWintSol:02, Blu:03} it is studied whether, in a MIMO
multiuser scenario, it is always convenient to use all
transmitting antennas.
It was found that for some values of \ac{SNR} and \ac{SIR},
allocating all power into a single transmitting antenna, rather
than dividing the power equally among independent streams from the 
different antennas, would lead to a higher \mynote{overall system mutual information}. 
The study in \cite{BluWintSol:02,Blu:03} adopts simulation to evaluate the capacity of MIMO systems
in the presence of cochannel interference, and the difficulties in
the evaluations limited the results to a scenario with two MIMO
users employing at most two antenna elements.
In \cite{MouSimSen:03} the replica method is used to obtain
approximate moments of the capacity for MIMO systems with large
number of antenna elements including the presence of interference.
The approximation requires iterative numerical methods to solve a
system of non-linear equations, and its accuracy has to be verified
by computer simulations. A multiuser MIMO system with specific receiver structures is analyzed for the
interference-limited case in \cite{DaiPoo:03,DaiMolPoo:04}.

The MIMO capacity at high and low \ac{SNR} for interference-limited scenarios is addressed in \cite{LozTulVer:03,LozTulVer:05}.
A worst-case analysis for MIMO capacity with \ac{CSI} at the transmitter and at the receiver, conditioned on the channel matrix, can be found in \cite{JorBoc:04}.
Asymptotic results for the Rician channel in the presence of interference can be found in \cite{TarRie:07}.

In this paper, we develop an analytical framework to analyze the
ergodic capacity of \ac{MIMO} systems in the presence of multiple
\ac{MIMO} cochannel interferers and \ac{AWGN}. Throughout the
paper we consider rich scattering environments in which
transmitters have no \ac{CSI}, the receiver has perfect \ac{CSI},
and all links undergo frequency flat Rayleigh fading.
The key contributions of the paper are as follows:
\begin{itemize}
\item Generalization of the determinant representation of
hypergeometric functions with matrix arguments
 to the case where matrices in the arguments
have eigenvalues with arbitrary multiplicity.
\item
Derivation, using the generalized representation, of the joint
\ac{p.d.f.} of the eigenvalues of complex Gaussian quadratic forms
and Wishart matrices, with arbitrary multiplicities for the
eigenvalues of the associated covariance matrix.
\item
Derivation of the ergodic capacity of single-user
MIMO systems that accounts for arbitrary power levels and arbitrary correlation across the
transmitting antenna elements, \mynote{ or arbitrary correlation at the receiver side}.
\item
Derivation of capacity expressions for \ac{MIMO} systems in the
presence of multiple MIMO interferers, valid for any number of interferers, each with
arbitrary number of antennas having possibly unequal power levels.

\end{itemize}

%

The paper is organized as follows: in Section~\ref{sec:sys} we
introduce the system model for multiuser MIMO setting, relating
the ergodic capacity of MIMO systems in the presence of multiple \ac{MIMO}
interferers to that of single-user MIMO systems with no interference.
General
results on hypergeometric functions of matrix arguments are
given in Section~\ref{sec:appendixhypfun}. The 
joint \ac{p.d.f.} of eigenvalues for Gaussian quadratic forms and Wishart matrices with arbitrary correlation is given   in Section~\ref{sec:quadforms}.
 In Section~\ref{sec:unifMIMOcap} we give a unified expression for the
capacity of single-user MIMO systems that accounts for arbitrary correlation matrix at one side. 
Numerical results for MIMO relay networks and multiuser MIMO are presented in Section~\ref{sec:MIMOCCIcap},  and conclusions are given in Section~\ref{sec:conclusion}.


 Throughout the
paper vectors and matrices are indicated by bold, $|\bf A|$ and
$\det \bf A$ denote the determinant of matrix $\bf A$, and
$a_{i,j}$ is the $\ith{(i,j)}$ element of $\bf A$.
Expectation operator is denoted by $\EX{\cdot}$, and in particular
$\EXs{X}{\cdot}$ denotes expectation with respect to the random
variable $X$. The superscript $\dag$ denotes conjugation and
transposition, $\I$ is the identity matrix (in particular $\I_n$
refers to the $(n \times n)$ identity matrix), $\tr{\bf A}$ is the trace of ${\bf A}$ and $\oplus$ is
used for the direct sum of matrices \mynote{defined as ${\bf A}\oplus {\bf B}=\diag{\bf A,B}$} \cite{HorJoh:B90}.

\section{System Models}\label{sec:sys}
 
We consider a network scenario as shown in
Fig.~\ref{fig:scenario}, where a MIMO-$(\Nt_0,\Nr)$ link\mynote{, with $\Nt_0$ and $\Nr$ denoting the numbers of transmitting and receiving antennas, respectively,}  is
subject to $\Ni$ MIMO co-channel interferers from other links,
each with arbitrary number of antennas.
\begin{figure}
\centerline{
\unitlength 0.90mm 
\linethickness{0.4pt}
\ifx\plotpoint\undefined\newsavebox{\plotpoint}\fi 
\begin{picture}(89,107)(0,0)
\put(10.2,103.4){\line(1,0){8}}
\put(22.2,106.4){\line(0,-1){6}}
\put(15.2,82.4){\makebox(0,0)[cc]{$\Nt_0$}}
\put(18.2,103.4){\line(4,3){4}}
\put(18.2,103.4){\line(4,-3){4}}
\put(10.45,88.15){\line(1,0){8}}
\put(18.45,88.15){\line(4,3){4}}
\put(14.95,95.65){\oval(2.5,19.5)[]}
\multiput(18.55,87.99)(.0525676,-.0333784){74}{\line(1,0){.0525676}}
\put(22.44,85.52){\line(0,1){5.3}}
\put(78,87.12){\line(-1,0){8}}
\put(78,103.12){\line(-1,0){8}}
\put(66,84.12){\line(0,1){6}}
\put(66,100.12){\line(0,1){6}}
\put(74,82.12){\makebox(0,0)[]{$\Nr$}}
\put(70,87.12){\line(-4,-3){4}}
\put(70,103.12){\line(-4,-3){4}}
\put(70,87.12){\line(-4,3){4}}
\put(70,103.12){\line(-4,3){4}}
\put(73.25,94.87){\oval(2.5,19.5)[]}
\put(4,94){\makebox(0,0)[cc]{$\Pt_0, \ \x_0$}}
\put(38,96){\makebox(0,0)[cc]{$\H_0$}}
\put(37,73){\makebox(0,0)[cc]{$\H_1$}}
\put(38,34){\makebox(0,0)[cc]{$\H_{\Ni}$}}
\put(55,96){\vector(1,0){.07}}
\put(44,96){\line(1,0){11}}
\put(32,70){\vector(3,2){.07}}
\multiput(27,67)(.0561798,.0337079){89}{\line(1,0){.0561798}}
\put(51,81){\vector(3,2){.07}}
\multiput(43,76)(.05369128,.03355705){149}{\line(1,0){.05369128}}
\put(32,26){\vector(3,4){.07}}
\multiput(26,19)(.03370787,.03932584){178}{\line(0,1){.03932584}}
\put(55,57){\vector(2,3){.07}}
\multiput(44,40)(.033639144,.051987768){327}{\line(0,1){.051987768}}
\put(89,94){\makebox(0,0)[cc]{$\y$}}
\put(10.2,25){\line(1,0){8}}
\put(10.2,73){\line(1,0){8}}
\put(22.2,28){\line(0,-1){6}}
\put(22.2,76){\line(0,-1){6}}
\put(15.2,4){\makebox(0,0)[cc]{$\Nt_{\Ni}$}}
\put(15.2,52){\makebox(0,0)[cc]{$\Nt_1$}}
\put(18.2,25){\line(4,3){4}}
\put(18.2,73){\line(4,3){4}}
\put(18.2,25){\line(4,-3){4}}
\put(18.2,73){\line(4,-3){4}}
\put(10.45,9.75){\line(1,0){8}}
\put(10.45,57.75){\line(1,0){8}}
\put(18.45,9.75){\line(4,3){4}}
\put(18.45,57.75){\line(4,3){4}}
\put(14.95,17.25){\oval(2.5,19.5)[]}
\put(14.95,65.25){\oval(2.5,19.5)[]}
\multiput(18.55,9.59)(.0525676,-.0333784){74}{\line(1,0){.0525676}}
\multiput(18.55,57.59)(.0525676,-.0333784){74}{\line(1,0){.0525676}}
\put(22.44,7.12){\line(0,1){5.3}}
\put(22.44,55.12){\line(0,1){5.3}}
\put(4,15.6){\makebox(0,0)[cc]{$\Pt_{\Ni}, \ \x_{\Ni}$}}
\put(4,63.6){\makebox(0,0)[cc]{$\Pt_1, \ \x_1$}}
\put(15,45){\circle*{0}}
\put(15,41){\circle*{2}}
\put(15,36){\circle*{2}}
\put(15,45){\circle*{2}}
\put(80,103){\circle{2.83}}
\put(80,87){\circle{2.83}}
\put(80,104){\line(0,-1){2}}
\put(80,88){\line(0,-1){2}}
\put(80,102){\line(0,1){0}}
\put(80,86){\line(0,1){0}}
\put(80,102){\line(0,1){0}}
\put(80,86){\line(0,1){0}}
\put(79,103){\line(1,0){2}}
\put(79,87){\line(1,0){2}}
\put(81,103){\line(1,0){3}}
\put(81,87){\line(1,0){3}}
\put(79,107){\line(0,1){0}}
\put(80,101){\vector(0,1){.07}}
\put(80,98){\line(0,1){3}}
\put(80,85){\vector(0,1){.07}}
\put(80,82){\line(0,1){3}}
\put(80,78){\makebox(0,0)[cc]{$\n$}}
\put(33,96){\vector(1,0){.07}}
\put(27,96){\line(1,0){6}}
\end{picture}
} \caption{MIMO Network.} \label{fig:scenario}
\end{figure}
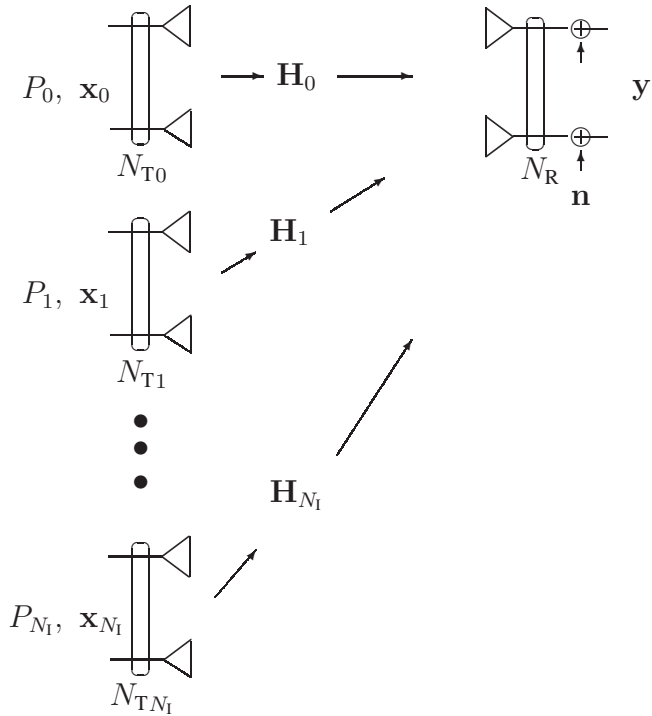
The $\Nr$-dimensional equivalent lowpass signal $\bf y$, after
matched filtering and sampling, at the output of the receiving
antennas can be written as 
\begin{equation}
\label{eq:y}
 \y= \H_0 \, \x_0 + \sum_{k=1}^{\Ni} \H_k \, \x_k+ \n
\end{equation}
where $\x_0, \x_1, \ldots, \x_{\Ni} $ denote the complex
transmitted vectors with dimensions $\Nt_0, \Nt_1, \ldots,
\Nt_{\Ni}$, respectively. Subscript $0$ is used for the desired
signal, while subscripts $1,\ldots ,\Ni$ are for the interferers.
The additive noise 
$\n$ is an $\Nr$-dimensional random vector with zero-mean
\ac{i.i.d.}  circularly symmetric complex Gaussian entries, each
with independent real and imaginary parts having variance $\sigma^2/2$,
so that $\EX{\n \n^{\dag}}=\sigma^2 \I$. The power transmitted
from the $\ith{k}$ user is $\EX{\x_k^{\dag}\x_k}=\Pt_k$.

The matrices $\H_k$ in \eqref{eq:y} denote the channel matrices of size $(\Nr \times \Nt_k)$ 
with complex elements $h^{(k)}_{i,j}$ describing the gain of the
radio channel between the $\ith{j}$ transmitting antenna of the
$\ith{k}$ MIMO interferers and the $\ith{i}$ receiving antenna of
the desired link. In particular, $\H_0$ is the matrix describing
the channel of the desired link (see Fig.~\ref{fig:scenario}).

When considering statistical variations of the channel, the
channel gains must be described as \acp{r.v.}.
In particular, we assume uncorrelated MIMO Rayleigh fading channels for which the
entries of $\H_k$ are \ac{i.i.d.} circularly symmetric complex
Gaussian \acp{r.v.} with zero-mean
and variance one, i.e., $\EX{|h^{(k)}_{i,j}|^2}=1$. 
%
\mynote{
With this normalization, $\Pt_k$ represents the short-term average received power per antenna element 
from user $k$, which depends on the transmit power, path-loss, and shadowing between transmitter $k$ and the (interfered) receiver.  Thus, the  $\Pt_k$ are in general different.
}


Conditioned to the channel matrices
$\left\{\H_k\right\}_{k=0}^{\Ni}$, the mutual information between
the received vector, $\y$, and the desired transmitted vector,
$\x_0$, is \cite{CovTho:B91} 
%
%

\begin{equation}
\label{eq:Ixy1} {\mathcal{I}}
\left(\x_0 \ ;\ \y \ | \left\{\H_k\right\}_{k=0}^{\Ni} \right)= \mathcal{H}\left(\y \ | \left\{\H_k\right\}_{k=0}^{\Ni} \right)-\mathcal{H}\left({\y \ | \ \x_0}, \left\{\H_k\right\}_{k=0}^{\Ni} \right)
\end{equation}
where ${\mathcal{H}({\cdot})}$ denotes differential entropy.

Here we consider the scenario in which the receiver has perfect
\ac{CSI}, and all the transmitters have no \ac{CSI}. Note
that
 the term \ac{CSI} includes the information about the
channels associated with all other MIMO interfering users. 
In this case, 
\mynote{since the users do not know what is the interference seen at the receiver (if any), a reasonable strategy is that 
each user transmits 
} 
 circularly
symmetric Gaussian vector signals with zero mean and \ac{i.i.d.}
elements. Thus, the transmit
power per antenna element of the $\ith{k}$ user is $\Pt_k/\Nt_k$. 
%
Note that this model includes the case in which the power levels 
 of the individual antennas are different: it suffices to decompose a
transmitter into virtual sub-transmitters, each with the proper
power level.

Hence, conditioned on all channel matrices
$\left\{\H_k\right\}_{k=0}^{\Ni}$ in \eqref{eq:y}, both $\y$ and
$\y| \x_0 $ are circularly symmetric Gaussian. Since the differential entropy of a
Gaussian vector is proportional to the logarithm of the
determinant of its covariance
matrix, 
we obtain the conditional \mynote{mutual information}
%
\begin{equation}
\label{eq:Ixy2}
 C_{\mathrm{MU}}\left(\left\{\H_k\right\}_{k=0}^{\Ni}\right)=\log\frac{\det \mathbf{K}_{\y}}{\det \mathbf{K}_{\y |
 \x_0}}
\end{equation}
where $\mathbf{K}_{\y}$ 
 and $\mathbf{K}_{\y |\x_0}
$ respectively denote the covariance matrices of $\y$ and $\y|
\x_0 $, conditioned on the channel gains
$\left\{\H_k\right\}_{k=0}^{\Ni}$.
By expanding the covariance matrices using \eqref{eq:y}, the
conditional  \mynote{mutual information} of a MIMO link in the presence of multiple
MIMO interferers with \ac{CSI} only at the receiver is then given
by:
\begin{equation}
\label{eq:CapCond}
C_{\mathrm{MU}}\left(\left\{\H_k\right\}_{k=0}^{\Ni}\right)=
\log \frac{\det\left({\bf I}_{\Nr} +{\bf \tilde{H} \tilde{\Psi}
\tilde{H}^{\dag} } \right)}{\det\left({\bf I}_{\Nr}+ {\bf H \Psi
H^{\dag}}\right)}
\end{equation}
where the $\Nr \times (\sum_{i=1}^{\Ni}\Nt_i)$ matrix $\H$ is
$$\H=\left[\H_1|\H_2| \cdots |\H_{\Ni}\right]$$
the $\Nr \times (\sum_{i=0}^{\Ni}\Nt_i)$ matrix $\tilde{\H}$ is
$$\tilde{\H}=\left[\H_0|\H\right]$$
the covariance matrices ${\bf {\Psi}}, {\bf \tilde{\Psi}}$ are


\begin{equation}
\label{eq:MatrixPsi}
{\bf \Psi}=\r_1 \, \I_{\Nt_1} \oplus \r_2 \, \I_{\Nt_2} \oplus \cdots \oplus \r_{\Ni} \, \I_{\Nt_{\Ni}} 
\end{equation}
and
\begin{equation}
\label{eq:MatrixPsiTilde}
{\bf \tilde{\Psi}}= \r_0 \, \I_{\Nt_0} \oplus {\bf \Psi} 
\end{equation}
with
\begin{equation}\label{eq:ri}
\r_i= \frac{\Pt_i}{\Nt_i \sigma^2}.
\end{equation}


\mynote{With random channel matrices the 
 mutual information in
\eqref{eq:CapCond}  is 
the difference between} random variables of the form $\log\det\left({\bf
I}+ {\H {\bf \Phi} \H^{\dag} }\right)$ where the elements of $\H$
are \ac{i.i.d.} complex Gaussian and ${\bf \Phi}$ is a covariance
matrix. The statistics of such random variables has been
investigated in
\cite{ChiWinZan:J03,SmiRoySha:J03,ShiWinLeeChi:J05}, assuming that
the eigenvalues of ${\bf \Phi}$ were distinct. However, in the
scenario under analysis these results cannot be used directly,
since in \eqref{eq:CapCond} each eigenvalue $\r_i$ of ${\bf \Psi}$ and $\tilde{\mathbf{\Psi}}$ has multiplicity $\Nt_i$.

We consider  
the ergodic  \mynote{mutual information} as a performance measure: 
taking the expectation of \eqref{eq:CapCond} with respect to the
distribution of $\left\{\H_k\right\}_{k=0}^{\Ni}$, we get
\begin{equation}\label{eq:meanCapMIMOCCI}
\mathcal{C}_{\mathrm{MU}}\triangleq\EX{C_{\mathrm{MU}}\left(\left\{\H_k\right\}_{k=0}^{\Ni}\right)}=\mathcal{C}_{\mathrm{SU}}\left(\sum_{i=0}^{\Ni}\Nt_i,
\Nr, {\bf
\tilde{\Psi}}\right)-\mathcal{C}_{\mathrm{SU}}\left(\sum_{i=1}^{\Ni}\Nt_i,
\Nr, {\bf \Psi}\right)
\end{equation}
where $\mathcal{C}_{\mathrm{SU}}\left(\nt, \nr, {\bf \Phi}\right) \triangleq \EXs{\H}{\log\det\left({\bf I}_{\nr}+ {\H {\bf \Phi}
\H^{\dag} }\right)}$
denotes the ergodic \mynote{mutual information} 
 of a single-user MIMO-$(\nt, \nr)$
Rayleigh fading channel with unit noise variance per receiving antenna and channel covariance matrix ${\bf \Phi}$ at the transmitter.

Note that the ``building block''
$
\EXs{\H}{\log\det\left({\bf I}+ {\H {\bf \Phi}
\H^{\dag} }\right)}$ is simple to evaluate when the covariance
matrix ${\bf \Phi}$ is proportional to an identity matrix, which
corresponds to a typical interference-free case with equal
transmit power among all transmitting antennas (see, e.g.,
\cite{Tel:99}). In contrast, in the presence of interference, the
covariance matrix is of the type indicated in \eqref{eq:MatrixPsi}
and  \eqref{eq:MatrixPsiTilde}, where the power
levels of the different users are in general different. 
Note that even when the power for the $\ith{i}$ user is equally
spread over the $\Nt_i$ antennas, the matrices 
 in \eqref{eq:MatrixPsi} and \eqref{eq:MatrixPsiTilde} 
  are in general not proportional to identity matrices and their
eigenvalues have multiplicities greater than one.
%
Therefore, studying MIMO systems in the presence of multiple MIMO
cochannel interferers requires the characterization of
$\mathcal{C}_{\mathrm{SU}}\left(\nt, \nr, {\bf \Phi}\right)$ in a general setting in which the covariance matrix ${\bf \Phi}$ has eigenvalues of arbitrary multiplicities.

To this aim, we derive \mynote{in the next sections} simple expressions for the hypergeometric
functions of matrix arguments with not necessarily distinct
eigenvalues
; then, we obtain the joint \ac{p.d.f.} of the eigenvalues of central Wishart matrices as well as that of Gaussian quadratic forms with arbitrary covariance matrix.
%

\mynote{\section{Hypergeometric functions with matrix arguments having arbitrary eigenvalues} \label{sec:appendixhypfun}
}


Hypergeometric functions with matrix arguments 
\cite{Jam:64} have been used extensively in multivariate statistical analysis, especially in problems related to the distribution of random matrices \cite{Mui:B82}. These functions are defined in terms of a series of zonal polynomials, and, as such, they are functions only of the eigenvalues (or latent roots) of the argument matrices \cite{Jam:64,Mui:B82}.

\begin{definition} The hypergeometric functions of two Hermitian $m \times m$ matrices ${\bf \Lambda}$ and ${\bf W}$ are defined by \cite{Jam:64}
%

%
\begin{equation}\label{eq:pFqXY}
{}_p\td{F}_q\left(a_1,\ldots,a_p; b_1,\ldots,b_q;{\bf \Lambda}, {\bf W}\right)\triangleq
\sum_{k=0}^{\infty}\sum_{\kappa}\frac{(a_1)_{\kappa}\cdots
(a_p)_{\kappa}}{(b_1)_{\kappa}\cdots
(b_q)_{\kappa}}\frac{C_{\kappa}({\bf \Lambda}) C_{\kappa}({\bf
W})}{k! {C_\kappa}({\bf I}_m) }
\end{equation}
where $C_{\kappa}(\cdot)$ is a symmetric homogeneous
polynomial of degree $k$ in the eigenvalues of its argument,
called {\it zonal polynomial}, the sum $\sum_{\kappa}$ is over all
partitions of $k$, i.e., $\kappa=(k_1,\ldots,k_m)$ with $k_1\ge
k_2 \ge \cdots \ge k_m \ge 0$, $k_1+k_2+\cdots+k_m=k$, and the
generalized hypergeometric coefficient $(a)_{\kappa}$ is given by
$(a)_{\kappa}=\prod_{i=1}^{m}\left(a-\frac{1}{2}(i-1)\right)_{k_i}$
 with $(a)_k=a (a+1) \cdots (a+k-1)$, $(a)_0=1$.
\end{definition}

We remark that zonal polynomials are symmetric polynomials in the
eigenvalues of the matrix argument. Therefore,  hypergeometric
functions are only functions of the
 eigenvalues of their matrix arguments. In other words,
without loss of generality we can replace ${\bf \Lambda}$ and ${\bf W}$ 
with the diagonal matrices $\diag{\lambda_1,\ldots\lambda_m}$ and
$\diag{w_1,\ldots w_m}$, where $\lambda_i$ and $w_j$ are the
eigenvalues of ${\bf \Lambda}$ and ${\bf W}$, respectively.
Clearly the order of ${\bf \Lambda}$ and ${\bf W}$ is unimportant.


It is quite evident that these functions expressed as a series of
zonal polynomials are in general very difficult to manage and the
form of \eqref{eq:pFqXY} is not tractable for further analysis.
Fortunately, when the eigenvalues of ${\bf \Lambda}$ and ${\bf W}$
are all distinct, a simpler expression  
in terms of determinants of matrices whose elements are
hypergeometric functions of scalar arguments can be obtained as follows \cite[Lemma 3]{Kha:70}:

\begin{lemma}\label{lemma: lemma3khatri} ([Khatri, 1970])
Let ${\bf\Lambda}=\diag{\lambda_1,\ldots\lambda_m}$ and ${\bf
W}=\diag{w_1,\ldots w_m}$ with $\lambda_1>\cdots>\lambda_m$ and
$w_1>\cdots>w_m$. Then we have
%
%
%
%
\begin{equation}\label{eq:pFqXYkhatri}
{}_p\td{F}_q\left(a_1,\ldots,a_p; b_1,\ldots,b_q; {\bf \Lambda}, {\bf W}\right)=
\Gamma_{(m)}(m)\frac{\psi_q^{(m)}(b)}{\psi_p^{(m)}(a)}
\frac{\left|{\bf G}\right|}{\prod_{i<j}\left(\lambda_i-\lambda_j\right)\prod_{i<j}\left(w_i-w_j\right)}
\end{equation}
where $\Gamma_{(m)}(n) \triangleq \prod_{i=1}^{m} (n-i)! $, 
$\psi_q^{(m)}(b)=\prod_{i=1}^{m}\prod_{j=1}^{q}(b_j-i+1)^{i-1}$
and the $\ith{ij}$ element of the ($m\times m$) matrix ${\bf
G}
$ is defined in terms of
hypergeometric functions of scalar arguments as follows
\begin{equation}\label{eq:Gkhatri}
g_{i,j}={}_p{F}_q\left(\tilde{a}_1,\ldots,\tilde{a}_p;
\tilde{b}_1,\ldots,\tilde{b}_q; \lambda_i w_j \right)
\end{equation}
with $\tilde{a}_i=a_i-m+1$ and $\tilde{b}_i=b_i-m+1$.

\noindent Important particular cases are
\begin{equation}\label{eq:0F0XYkhatri}
{}_0\td{F}_0\left({\bf \Lambda}, {\bf W}\right)=
\Gamma_{(m)}(m)\frac{\left|{\bf G}_0\right|}{\prod_{i<j}\left(\lambda_i-\lambda_j\right)\prod_{i<j}\left(w_i-w_j\right)}
\end{equation}
and
\begin{equation}\label{eq:1F0XYkhatri}
{}_1\td{F}_0\left(r;{\bf \Lambda}, {\bf W}\right)=
\frac{\Gamma_{(m)}(m)}{\psi_1^{(m)}(r)}\frac{\left|{\bf
G}_1
\right|}{\prod_{i<j}\left(\lambda_i-\lambda_j\right)\prod_{i<j}\left(w_i-w_j\right)}
\end{equation}
where the $\ith{ij}$ elements of ${\bf G}_0$ and ${\bf G}_1$ are
given by $e^{\lambda_i w_j}$ and $\left(1-\lambda_i w_j\right)^{m-r-1}$, respectively.
\end{lemma}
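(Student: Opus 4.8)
The plan is to reduce the matrix-argument series in \eqref{eq:pFqXY} to a single scalar series by means of the bialternant (Schur-polynomial) form of the zonal polynomials, and then to collapse the resulting sum over partitions into one determinant by a Cauchy--Binet--type argument. The starting point is that each $C_\kappa(\cdot)$ in \eqref{eq:pFqXY} equals a constant $c_\kappa$, depending only on $\kappa$, times the Schur polynomial $s_\kappa$ of the eigenvalues of its argument, and that $s_\kappa(x_1,\ldots,x_m)=\det(x_i^{\ell_j})_{i,j}/\prod_{i<j}(x_i-x_j)$, where $\ell_j=\kappa_j+m-j$; as $\kappa$ ranges over partitions into at most $m$ parts, $(\ell_1,\ldots,\ell_m)$ ranges over all strictly decreasing $m$-tuples of nonnegative integers. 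Substituting this for both ${\bf \Lambda}$ and ${\bf W}$ pulls the Vandermonde products $\prod_{i<j}(\lambda_i-\lambda_j)$ and $\prod_{i<j}(w_i-w_j)$ out of the double sum $\sum_k\sum_\kappa$, replaces that sum by a single sum over the index vector $(\ell_1,\ldots,\ell_m)$, and leaves the eigenvalue dependence entirely inside the two bialternants $\det(\lambda_i^{\ell_j})$ and $\det(w_i^{\ell_j})$.

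Next I would isolate the $\kappa$-dependent scalar coefficient and show that, after the substitution $\kappa_i=\ell_i-m+i$, it splits as a constant times a product over $j$ of a function of $\ell_j$ alone. The generalized Pochhammer symbol factors over the parts, $(a)_\kappa=\prod_{i=1}^m(a-i+1)_{\kappa_i}$, and since $\kappa_i-i=\ell_i-m$ this becomes $\big(\psi_1^{(m)}(a)\big)^{-1}\prod_{i=1}^m(a-m+1)_{\ell_i}$, the telescoping prefactor being exactly $\psi_1^{(m)}(a)=\prod_i(a-i+1)^{i-1}$; taking the product over $a_1,\ldots,a_p$ and the reciprocal product over $b_1,\ldots,b_q$ produces the factor $\psi_q^{(m)}(b)/\psi_p^{(m)}(a)$ together with $\prod_i[\prod_r(\tilde a_r)_{\ell_i}/\prod_r(\tilde b_r)_{\ell_i}]$, with $\tilde a_r=a_r-m+1$ and $\tilde b_r=b_r-m+1$. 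Similarly, the known value of $c_\kappa$ together with $C_\kappa({\bf I}_m)=c_\kappa\,s_\kappa(1,\ldots,1)$ and $s_\kappa(1,\ldots,1)=\prod_{i<j}(\ell_i-\ell_j)/\prod_{i<j}(j-i)$ makes all the factors $\prod_{i<j}(\ell_i-\ell_j)$ cancel, leaving the $\kappa$-independent constant $\prod_{i<j}(j-i)=\Gamma_{(m)}(m)$ and the $\ell$-dependent residue $\prod_i(1/\ell_i!)$. Collecting everything, the coefficient multiplying $\det(\lambda_i^{\ell_j})\det(w_i^{\ell_j})$ is $\frac{\Gamma_{(m)}(m)\,\psi_q^{(m)}(b)/\psi_p^{(m)}(a)}{\prod_{i<j}(\lambda_i-\lambda_j)\prod_{i<j}(w_i-w_j)}\,\prod_j f_{\ell_j}$, where $f_\ell=\frac{1}{\ell!}\,\frac{\prod_r(\tilde a_r)_\ell}{\prod_r(\tilde b_r)_\ell}$ is precisely the $\ell$-th Taylor coefficient of the scalar function ${}_pF_q(\tilde a_1,\ldots,\tilde a_p;\tilde b_1,\ldots,\tilde b_q;\,\cdot\,)$.

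The last step is the recombination. In $\Sigma:=\sum_{\ell_1>\cdots>\ell_m\ge0}\big(\prod_j f_{\ell_j}\big)\det(\lambda_i^{\ell_j})\det(w_i^{\ell_j})$ the summand is invariant under permutations of $(\ell_1,\ldots,\ell_m)$ and vanishes whenever two indices coincide, so $\Sigma=\tfrac1{m!}\sum_{\ell_1,\ldots,\ell_m\ge0}\big(\prod_j f_{\ell_j}\big)\det(\lambda_i^{\ell_j})\det(w_i^{\ell_j})$; expanding the second determinant over permutations $\tau$ and using column-multilinearity of the first collapses each group of $\ell$-sums to $\det\big(\sum_{\ell\ge0}f_\ell(\lambda_i w_{\tau(j)})^\ell\big)_{i,j}=\operatorname{sgn}(\tau)\det\big({}_pF_q(\tilde a;\tilde b;\lambda_i w_j)\big)_{i,j}$, and the $m!$ terms contribute equally because $\operatorname{sgn}(\tau)^2=1$, giving $\Sigma=|{\bf G}|$ and hence \eqref{eq:pFqXYkhatri}. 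The particular cases follow by specialization: $p=q=0$ makes every $\psi$-factor an empty product equal to $1$ and $g_{i,j}={}_0F_0(\lambda_i w_j)=e^{\lambda_i w_j}$, which is \eqref{eq:0F0XYkhatri}; and $p=1$, $q=0$, $a_1=r$ gives $\tilde a_1=r-m+1$ and $g_{i,j}=(1-\lambda_i w_j)^{-\tilde a_1}=(1-\lambda_i w_j)^{m-r-1}$, which is \eqref{eq:1F0XYkhatri}. Throughout, absolute convergence of the series on the relevant range of the eigenvalues justifies the reindexing, the rearrangements, and the term-by-term determinant manipulations. I expect the main obstacle to be the constant-chasing in the second step: verifying that the normalization $c_\kappa$ of the zonal polynomials, the value $s_\kappa(1,\ldots,1)$, and the telescoping of the Pochhammer products combine to leave exactly $\Gamma_{(m)}(m)\,\psi_q^{(m)}(b)/\psi_p^{(m)}(a)$ and the cleanly shifted scalar parameters $\tilde a_r,\tilde b_r$, with no spurious residual factor.
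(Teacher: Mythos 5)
The paper itself offers no proof of this lemma: it is imported directly from Khatri's 1970 paper (\cite[Lemma 3]{Kha:70}) and used as a known building block, so there is no internal argument to compare against. Your derivation is correct, and it is essentially the classical route to Khatri's determinant formula for the complex-case functions ${}_p\tilde{F}_q$: complex zonal polynomials are Schur polynomials up to a $\kappa$-dependent constant, the bialternant form extracts the two Vandermonde products, the generalized Pochhammer telescopes under $\kappa_i=\ell_i-m+i$ (your bookkeeping checks out: $\prod_{j=0}^{m-1}(a-m+1)_j=\psi_1^{(m)}(a)$, and the requirement that the residual constant be $\Gamma_{(m)}(m)\prod_i 1/\ell_i!$ amounts to $c_\kappa=f^\kappa$, the number of standard tableaux, which is the correct normalization since $(\operatorname{tr}X)^k=\sum_\kappa \tilde{C}_\kappa(X)=\sum_\kappa f^\kappa s_\kappa(X)$), and the final collapse of the sum over strictly decreasing index vectors into $\det\bigl({}_pF_q(\tilde{a};\tilde{b};\lambda_i w_j)\bigr)$ is exactly Cauchy--Binet/Andr\'eief. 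Two caveats you should make explicit if writing this up: first, the whole argument lives in the complex-case conventions (Schur-proportional zonal polynomials and $(a)_\kappa=\prod_i(a-i+1)_{k_i}$), which is what the determinant representation requires; the paper's Definition~1 writes the real-case coefficient $(a-\tfrac12(i-1))_{k_i}$, so you are right to use the complex convention, but the discrepancy deserves a remark since the formula fails for the real-case functions. Second, the rearrangements need absolute convergence, which holds for all arguments when $p\le q$ (covering \eqref{eq:0F0XYkhatri}) but only for $\max_i|\lambda_i|\,\max_j|w_j|<1$ when $p=q+1$, so \eqref{eq:1F0XYkhatri} should be read as an identity of analytic continuations outside that range.
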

\medskip

%

These expressions 
have been recently used 
 to study the distribution of Gaussian quadratic forms, 
to express the \ac{p.d.f.} of the eigenvalues of Wishart matrices, and to analyze the 
information-theoretic capacity and error rates of communication systems involving multiple antennas 
\cite{GaoSmi:00,Chi:02,ChiWinZan:J03,ChiWinZanMalWint:J03,ChiWinZan:J03a,ZanChiWin:J05,McKCol:05,SmiRoySha:J03,ShiWinLeeChi:J05,ChiWinZan:J05,ZanChiWin:C05}.
However, it is important to underline that Lemma~\ref{lemma: lemma3khatri}
requires the eigenvalues of the matrices to be all distinct.

\medskip

Here, we generalize Lemma~\ref{lemma: lemma3khatri} to include the case where the eigenvalues are not necessarily distinct. To this aim we first need the following lemma.

\begin{lemma}\label{lemma: lemma0chiani}

Let $P: {\bf A} \rightarrow \mathbb{R}$ be defined over ${\bf A}
\subset \mathbb{R}^m$ as follows:

\begin{equation}\label{eq:defPw1wm}
P(w_1,\ldots,w_m)\triangleq \frac{1}{\prod_{i<j}(w_i-w_j)}
  \cdot \left|\begin{array}{llll} 
    f_1(w_1) & f_1(w_2) & \cdots & f_1(w_m) \\
                      &                   &        &         \\
    \vdots            & \vdots            & \cdots & \vdots  \\
                      &                   &        &         \\
    f_m(w_1) & f_m(w_2) & \cdots & f_m(w_m) \\
  \end{array}\right|\,
\end{equation}
where $w_1> w_2 \cdots > w_m$, and the functions $f_i(w)$ have
derivatives $f^{(n)}_i(w)=\frac{d^n f_i(w)}{d w ^n}$ of orders at
least $m-1$ throughout neighborhoods of the points $w_1, \ldots,
w_m$.

Then, the continuous extension $\breve{P}(w_1,w_2,\ldots,w_m)$ of
the function $P(w_1,w_2,\ldots,w_m)$ to those points in
$\mathbb{R}^m$ with $L$ coincident arguments $w_K=w_{K+1}=\cdots
w_{K+L-1}$ is obtained by removing the zero factors  from the
denominator in \eqref{eq:defPw1wm}, replacing the columns of the
matrix in \eqref{eq:defPw1wm} corresponding to the coincident
arguments with the successive derivatives $f^{(L-l)}_i(w_K),$ $l=1,\ldots,L$, and dividing by a scaling
factor $\Gamma_{(L)}(L)=\prod_{i=1}^{L-1} i!$.

For example, for $w_1=w_2=\cdots w_L$, this procedure gives

\bigskip

\lefteqn{
\breve{P}(w_1,w_2,\ldots,w_m)=\frac{1}{\prod_{i<j, w_i \neq
w_j}(w_i-w_j) \prod_{i=1}^{L-1} i!}\cdot}

\begin{equation}\label{eq:theorem0}
\cdot  \left|\begin{array}{llllllll} 
    f^{(L-1)}_1(w_1) & f^{(L-2)}_1(w_1) & \cdots & f_1(w_1) & f_1(w_{L+1}) & \cdots & f_1(w_m) \\
                      &       & & &             &    &             \\
    \vdots            & \vdots &&             & \vdots & \cdots & \vdots  \\
                      &       & &              &      &  &         \\
    f^{(L-1)}_m(w_1) & f^{(L-2)}_m(w_1) & \cdots & f_m(w_1) & f_m(w_{L+1}) & \cdots & f_m(w_m) \\
  \end{array}\right|\,.
\end{equation}
%
%
More generally,
a similar expression is valid if there are more groups of coinciding arguments: in this case,
 for each group of coincident arguments $w_K=\ldots =w_{K+L-1}$ the correspondent columns of the matrix in
\eqref{eq:defPw1wm} are to be replaced by  $f^{(L-l)}_i(w_K),$ $l=1,\ldots,L$, with a scaling factor
$\prod_{i=1}^{L-1} i!$.

\end{lemma}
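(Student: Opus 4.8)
The plan is to reduce the coincident-argument limit to the classical divided-difference (Neville) column reduction of a generalized Vandermonde-type determinant, and then to identify confluent divided differences with derivatives.

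First I would treat a single group of coincident arguments, say $w_1=w_2=\cdots=w_L$ with $w_{L+1},\ldots,w_m$ distinct from one another and from $w_1$; the statement for several groups follows by iterating. The key algebraic identity is the divided-difference column reduction
\[
\det\!\big[f_i(w_j)\big]_{i,j=1}^{m}=\Big(\prod_{1\le j<k\le L}(w_k-w_j)\Big)\,\det\!\big[\,f_i[w_1]\,\big|\,f_i[w_1,w_2]\,\big|\,\cdots\,\big|\,f_i[w_1,\ldots,w_L]\,\big|\,f_i(w_{L+1})\,\big|\,\cdots\,\big|\,f_i(w_m)\,\big],
\]
where $f_i[w_1,\ldots,w_l]$ is the divided difference of $f_i$ on the first $l$ nodes; one verifies this by induction on $L$, since the elementary operation $C_l\leftarrow(C_l-C_{l-1})/(\text{node difference})$, applied in the Neville pattern to the first $L$ columns only, leaves the remaining $m-L$ columns untouched and generates exactly the prefactor $\prod_{1\le j<k\le L}(w_k-w_j)$.

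Next I would divide by $\prod_{i<j}(w_i-w_j)$ and split the denominator as $\big(\prod_{1\le i<j\le L}(w_i-w_j)\big)\big(\prod_{i<j,\,j>L}(w_i-w_j)\big)$; the first block equals $(-1)^{\binom{L}{2}}\prod_{1\le j<k\le L}(w_k-w_j)$ and hence cancels the Vandermonde prefactor up to the sign $(-1)^{\binom{L}{2}}$. Letting $w_2,\ldots,w_L\to w_1$, the Hermite--Genocchi representation $f_i[w_1,\ldots,w_l]=\int_{\Delta_{l-1}}f_i^{(l-1)}\!\big(\textstyle\sum t_j w_j\big)\,dt$ (valid since $f_i\in C^{m-1}$) shows that each divided difference is jointly continuous in its nodes, with confluent value $f_i[\,\underbrace{w_1,\ldots,w_1}_{l}\,]=f_i^{(l-1)}(w_1)/(l-1)!$, while the surviving denominator tends to $\prod_{i<j,\,w_i\neq w_j}(w_i-w_j)$. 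This at once establishes that the continuous extension $\breve P$ exists and expresses it through the determinant whose first $L$ columns are $f_i^{(l-1)}(w_1)/(l-1)!$, $l=1,\ldots,L$. Factoring $1/(l-1)!$ out of the $l$-th column contributes $\prod_{l=1}^{L}1/(l-1)!=1/\prod_{i=1}^{L-1}i!=1/\Gamma_{(L)}(L)$, and reversing the order of these $L$ columns (to display them as $f_i^{(L-1)}(w_1),\ldots,f_i^{(0)}(w_1)$, as in \eqref{eq:theorem0}) contributes a further $(-1)^{\binom{L}{2}}$, which cancels the earlier sign and yields exactly \eqref{eq:theorem0}.

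For several disjoint groups of coincident arguments, the column operations attached to different groups act on disjoint blocks of columns and therefore compose without interference; applying the above to each group in turn, each within-group Vandermonde block cancels the corresponding reduction prefactor (again up to a sign killed by the corresponding column reversal), and only the cross-group factors $w_i-w_j$ with $w_i\neq w_j$ survive, assembling into the claimed denominator and the constant $\prod_{\text{groups}}\Gamma_{(L)}(L)$. The step I expect to require the most care is the bookkeeping: verifying the divided-difference column-reduction identity for arbitrary $L$ and tracking the two $(-1)^{\binom{L}{2}}$ contributions (from reindexing the within-group Vandermonde factor and from reversing the derivative columns) to confirm that they cancel in each group; the joint continuity of the multivariate divided difference, while indispensable, is classical given the stated differentiability hypothesis.
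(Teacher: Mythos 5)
Your proof is correct, but it follows a genuinely different route from the paper. The paper argues by induction on the size $L$ of the coincident group: assuming the formula for $L$ coincident arguments, it lets $w_{L+1}\to w_{L}$, Taylor-expands the column $f_i(w_L-\epsilon)$, uses multilinearity of the determinant in that column so that all terms of order $n\le L-1$ vanish (their columns coincide with the derivative columns already present), and then cancels the $\epsilon^{L}$ from the denominator and fixes the sign by a cyclic permutation of the first $L+1$ columns. You instead establish, for distinct nodes, an exact Neville/divided-difference column reduction of the determinant, cancel the within-group Vandermonde block against the denominator, and pass to the confluent limit via the Hermite--Genocchi representation, with the two $(-1)^{\binom{L}{2}}$ signs (block reindexing and column reversal) cancelling as you say; your bookkeeping of the constant $\prod_{l=1}^{L}1/(l-1)!=1/\Gamma_{(L)}(L)$ and of the surviving denominator $\prod_{i<j,\,w_i\neq w_j}(w_i-w_j)$ checks out. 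Your approach buys a one-shot treatment of an arbitrary group (and of several groups, since the column operations act on disjoint blocks) together with an explicit closed-form identity valid off the diagonal, which makes the existence and continuity of the extension transparent; the paper's induction is more elementary, requiring only a pointwise Taylor (Peano) expansion and no divided-difference machinery. One small caveat: Hermite--Genocchi as you invoke it needs continuity of $f_i^{(L-1)}$ on the convex hull of the confluent nodes, which is marginally more than the bare existence of derivatives stated in the lemma; this is on par with the implicit smoothness used in the paper's own Taylor step, but it is worth flagging the assumption when you use it.
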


\bigskip

\begin{proof}  \mynote{See Appendix \ref{app:proofs}.} 
\end{proof}

\medskip

With Lemma \ref{lemma: lemma0chiani} we can now generalize
\eqref{eq:0F0XYkhatri}, \eqref{eq:1F0XYkhatri} and, more
generally, \eqref{eq:pFqXYkhatri}.

\begin{lemma}\label{lemma: lemma1chiani}

Let ${\bf\Lambda}=\diag{\lambda_1,\ldots\lambda_m}$ and ${\bf
W}=\diag{w_1,\ldots w_m}$ with $\lambda_1>\cdots>\lambda_m$ and
$w_1>\cdots>w_k=w_{k+1}=\cdots=w_{k+L-1}>w_{k+L}>\cdots>w_m$. 
Then we have\footnote{From here on we will use the same symbols for the
functions \eqref{eq:0F0XYkhatri}, \eqref{eq:1F0XYkhatri},
\eqref{eq:pFqXYkhatri} and their continuous extension.}
\begin{equation}\label{eq:0F0XYchiani}
{}_0\td{F}_0\left({\bf \Lambda}, {\bf W}\right)=
\frac{\Gamma_{(m)}(m)}{\Gamma_{(L)}(L)}\frac{\left|{\bf G
}\right|}{\prod_{i<j}\left(\lambda_i-\lambda_j\right)\prod_{i<j ,
\, w_i\ne w_j }\left(w_i-w_j\right)}
\end{equation}
where the elements of ${\bf G}$ are
\begin{equation}\label{eq:G0F0chiani}
g_{i,j}=\left\{
  \begin{array}{ll}
    \lambda_i^{L-1+k-j} e^{\lambda_i w_k} & \qquad j=k,\ldots,k+L-1 \\
    e^{\lambda_i w_j} & \qquad \text{elsewhere}
  \end{array}
  \right. 
\end{equation}
that is, the matrix $\bf G$ is the same as that appearing in
\eqref{eq:0F0XYkhatri} except that the $L$ columns corresponding to the coincident eigenvalues are $\lambda^{L-1}_i e^{\lambda_i w_k}, \lambda^{L-2}_i e^{\lambda_i w_k}, \ldots, \lambda^{2}_i
e^{\lambda_i w_k}, \lambda_i e^{\lambda_i w_k}, e^{\lambda_i
w_k}$.

%
\end{lemma}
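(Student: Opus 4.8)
The plan is to obtain \eqref{eq:0F0XYchiani} by specializing Khatri's formula \eqref{eq:0F0XYkhatri} to a configuration with all eigenvalues distinct and then letting $L$ of the arguments $w_i$ coalesce, using Lemma~\ref{lemma: lemma0chiani} to identify the limit. First fix ${\bf\Lambda}=\diag{\lambda_1,\ldots,\lambda_m}$ with $\lambda_1>\cdots>\lambda_m$, and let $w_1>w_2>\cdots>w_m$ be strictly ordered. By Lemma~\ref{lemma: lemma3khatri},
$$
{}_0\td{F}_0\!\left({\bf\Lambda},{\bf W}\right)=\frac{\Gamma_{(m)}(m)}{\prod_{i<j}(\lambda_i-\lambda_j)}\cdot\frac{1}{\prod_{i<j}(w_i-w_j)}\,\bigl|{\bf G}_0\bigr|,\qquad (g_0)_{i,j}=e^{\lambda_i w_j}.
$$
Setting $f_i(w)\teq e^{\lambda_i w}$, the $\ith{(i,j)}$ entry of ${\bf G}_0$ is exactly $f_i(w_j)$, so the second factor on the right is precisely the function $P(w_1,\ldots,w_m)$ of \eqref{eq:defPw1wm}. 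Hence, on the open dense set where the $w_i$ are strictly ordered,
$$
{}_0\td{F}_0\!\left({\bf\Lambda},{\bf W}\right)=\frac{\Gamma_{(m)}(m)}{\prod_{i<j}(\lambda_i-\lambda_j)}\,P(w_1,\ldots,w_m).
$$

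Next, let $w_k=w_{k+1}=\cdots=w_{k+L-1}$ as in the statement. The left-hand side is continuous (indeed entire) in $(w_1,\ldots,w_m)$, since ${}_0\td{F}_0$ is a symmetric function of the latent roots given by the locally uniformly convergent zonal-polynomial series \eqref{eq:pFqXY}; therefore its value at the coincident configuration equals the limit of the right-hand side along strictly ordered $w_i$ tending to it. Because $f_i(w)=e^{\lambda_i w}$ is smooth (so it has derivatives of every order, in particular of order $\ge m-1$), Lemma~\ref{lemma: lemma0chiani} applies and gives this limit as $\frac{\Gamma_{(m)}(m)}{\prod_{i<j}(\lambda_i-\lambda_j)}\,\breve{P}(w_1,\ldots,w_m)$, where $\breve{P}$ is obtained from $P$ by deleting the vanishing factors $(w_i-w_j)$ with $w_i=w_j$ from the denominator, replacing the $L$ columns of the matrix in \eqref{eq:defPw1wm} indexed by $k,\ldots,k+L-1$ with $f^{(L-1)}_i(w_k),f^{(L-2)}_i(w_k),\ldots,f_i(w_k)$, and dividing by $\Gamma_{(L)}(L)=\prod_{i=1}^{L-1}i!$.

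It then remains to evaluate the derivatives: $f^{(n)}_i(w)=\lambda_i^{\,n}e^{\lambda_i w}$, so the replaced columns carry the entries $\lambda_i^{L-1}e^{\lambda_i w_k},\lambda_i^{L-2}e^{\lambda_i w_k},\ldots,\lambda_i e^{\lambda_i w_k},e^{\lambda_i w_k}$, i.e. $g_{i,j}=\lambda_i^{\,L-1+k-j}e^{\lambda_i w_k}$ for $j=k,\ldots,k+L-1$ and $g_{i,j}=e^{\lambda_i w_j}$ otherwise, which is exactly the matrix ${\bf G}$ of \eqref{eq:G0F0chiani}. Substituting back yields \eqref{eq:0F0XYchiani}; the case of several disjoint groups of coincident $w_i$ is handled verbatim by the general form of Lemma~\ref{lemma: lemma0chiani}, contributing one block of derivative-columns and one factor $\Gamma_{(L)}(L)$ per group. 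The only steps needing care are the appeal to continuity/analyticity of ${}_0\td{F}_0$ in the limiting argument (which is precisely why the result is phrased as a continuous extension), and the standing hypothesis that the $\lambda_i$ are distinct so that Khatri's formula is available before passing to the limit; coalescing eigenvalues of ${\bf\Lambda}$ as well would require a further, entirely analogous, application of Lemma~\ref{lemma: lemma0chiani} in the $\lambda$ variables.
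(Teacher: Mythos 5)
Your proof is correct and follows the same route as the paper: the paper's proof is exactly the one-line application of Lemma~\ref{lemma: lemma0chiani} with $f_i(w)=e^{\lambda_i w}$ to Khatri's formula \eqref{eq:0F0XYkhatri}, and your additional details (identifying the ratio with $P$, the continuity of ${}_0\td{F}_0$ justifying the limit, and the derivative computation $f^{(n)}_i(w)=\lambda_i^n e^{\lambda_i w}$) simply spell out what the paper leaves implicit via its footnote on continuous extensions.
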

\begin{proof}
The proof is immediate by direct application of Lemma~\ref{lemma:
lemma0chiani} with $f_i(w)=e^{\lambda_i w}$.

\end{proof}

Lemma \ref{lemma: lemma1chiani} can be directly extended to more groups of
coincident eigenvalues. In general, the rule is that each eigenvalue
$w$ of multiplicity $L>1$ gives rise to $L$ columns $\lambda^{L-1}_i
e^{\lambda_i w},$ $\lambda^{L-2}_i e^{\lambda_i w}, \ldots,$ $
\lambda^{2}_i e^{\lambda_i w},\lambda_i e^{\lambda_i w},
e^{\lambda_i w}$  in the matrix ${\bf G}$ of \eqref{eq:0F0XYchiani},
with the proper scaling factor $\Gamma_{(L)}(L)$.

\medskip

Using Lemma~\ref{lemma: lemma1chiani} with $k=m-L+1$ and $w_k=0$ results in the following corollary, for the  case where some eigenvalues are equal to zero.

\begin{corollary} \label{corollary:lemma1chiani}
Let ${\bf\Lambda}=\diag{\lambda_1,\ldots\lambda_m}$ and ${\bf
W}=\diag{w_1,\ldots w_m}$ with $\lambda_1>\cdots>\lambda_m$ and
$w_1>\cdots>w_{m-L+1}=w_{m-L+2}=\cdots=w_{m}=0$. 
Then we have
{
\begin{equation}\label{eq:0F0XYchianiZeroEigen}
{}_0\td{F}_0\left({\bf \Lambda}, {\bf W}\right)=
\frac{\Gamma_{(m)}(m)}{\Gamma_{(L)}(L)}\frac{\left|{\bf G
}\right|}{\prod_{i<j}\left(\lambda_i-\lambda_j\right)\prod_{i<j\le
m-L}\left(w_i-w_j\right)\prod_{i=1}^{m-L}w^L_i} 
\end{equation}
}
where the elements of ${\bf G}$ are as follows
\begin{equation}\label{eq:G0F0chianiZeroEigen} g_{i,j}=\left\{
  \begin{array}{ll}
    \lambda_i^{m-j} & \qquad j=m-L+1,\ldots,m \\
    e^{\lambda_i w_j} & \qquad \text{elsewhere} .
  \end{array}
  \right. 
\end{equation}
%
\end{corollary}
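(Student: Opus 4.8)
The plan is to obtain this corollary as a direct specialization of Lemma~\ref{lemma: lemma1chiani}, letting the block of coincident eigenvalues sit at the bottom of the ordering and collapse to zero. Concretely, I would take $k = m-L+1$, so that the $L$ repeated eigenvalues are $w_{m-L+1} = w_{m-L+2} = \cdots = w_m$, and then set their common value $w_k = 0$. Since ${}_0\td{F}_0\left({\bf \Lambda}, {\bf W}\right)$ is an entire function of the eigenvalues of its arguments, the continuous extension furnished by Lemma~\ref{lemma: lemma1chiani} coincides with its value, so nothing is lost by allowing $w_k = 0$; moreover the hypotheses of Lemma~\ref{lemma: lemma1chiani} then require only $w_1 > \cdots > w_{m-L} > w_k$, i.e. $w_1 > \cdots > w_{m-L} > 0$, which is exactly the hypothesis of the corollary.

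Next I would evaluate the entries of the matrix ${\bf G}$ of \eqref{eq:G0F0chiani} at $w_k = 0$. For the $L$ columns indexed by $j = k, \ldots, k+L-1 = m$, the entry $\lambda_i^{L-1+k-j}\,e^{\lambda_i w_k}$ collapses to $\lambda_i^{L-1+(m-L+1)-j} = \lambda_i^{m-j}$, which is precisely the first branch of \eqref{eq:G0F0chianiZeroEigen}; the remaining columns ($j = 1,\ldots,m-L$) are unchanged, $g_{i,j} = e^{\lambda_i w_j}$. Hence the matrix ${\bf G}$ of the corollary is literally the matrix of Lemma~\ref{lemma: lemma1chiani} after the substitution, and the prefactor $\Gamma_{(m)}(m)/\Gamma_{(L)}(L)$ is untouched.

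The only step that requires a little care is rewriting the denominator factor $\prod_{i<j,\, w_i \ne w_j}(w_i - w_j)$ of \eqref{eq:0F0XYchiani}. I would split the pairs $(i,j)$ with $i<j$ into three classes: (i) both $i,j \le m-L$, contributing $\prod_{i<j\le m-L}(w_i - w_j)$; (ii) $i \le m-L < j$, for which $w_i \ne 0 = w_j$ and each such $i$ pairs with the $L$ indices $j = m-L+1,\ldots,m$, contributing $\prod_{i=1}^{m-L} w_i^{L}$; and (iii) both $i,j > m-L$, for which $w_i = w_j = 0$ and which are excluded by the condition $w_i \ne w_j$. Multiplying the contributions from (i) and (ii) reproduces exactly the denominator displayed in \eqref{eq:0F0XYchianiZeroEigen}, and the corollary follows. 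The ``main obstacle'' is thus only the bookkeeping of the repeated-root factors, together with a sanity check of the edge case $L = m$: there the products over $i<j\le m-L$ and $\prod_{i=1}^{m-L} w_i^L$ are both empty, ${\bf G}$ reduces to the Vandermonde matrix $\left(\lambda_i^{m-j}\right)$ with $|{\bf G}| = \prod_{i<j}(\lambda_i - \lambda_j)$, and \eqref{eq:0F0XYchianiZeroEigen} correctly yields ${}_0\td{F}_0\left({\bf \Lambda}, {\bf 0}\right) = 1$.
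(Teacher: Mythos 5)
Your proposal is correct and follows exactly the paper's route: Corollary~\ref{corollary:lemma1chiani} is obtained by specializing Lemma~\ref{lemma: lemma1chiani} with $k=m-L+1$ and $w_k=0$, which is all the paper does. Your extra bookkeeping of the factor $\prod_{i<j,\,w_i\ne w_j}(w_i-w_j)$ into $\prod_{i<j\le m-L}(w_i-w_j)\prod_{i=1}^{m-L}w_i^{L}$ and the $L=m$ sanity check simply make explicit what the paper leaves implicit.
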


\medskip
%

We can apply a similar methodology to derive the general expression
for ${}_1\td{F}_0(\cdot; \cdot,\cdot)$, as in the following Lemma.

\medskip

\begin{lemma}\label{lemma: lemma2chiani}
Let ${\bf\Lambda}=\diag{\lambda_1,\ldots\lambda_m}$ and ${\bf
W}=\diag{w_1,\ldots w_m}$ with $\lambda_1>\cdots>\lambda_m$ and
$w_1>\cdots>w_k=w_{k+1}=\cdots=w_{k+L-1}>w_{k+L}>\cdots>w_m$. 
 Then we have
\begin{equation}\label{eq:1F0XYchiani}
{}_1\td{F}_0\left(r; {\bf \Lambda}, {\bf
W}\right)=\frac{\Gamma_{(m)}(m)}{\Gamma_{(L)}
(L)}\frac{(-1)^{(L-1)L/2}}{\psi_1^{(m)}(r)} \cdot \frac{
\gamma^{L-1}(\gamma-1)^{L-2}\cdots(\gamma-L+2)\,\, \left|{\bf A
}\right| }{\prod_{i<j}\left(\lambda_i-\lambda_j\right)\prod_{i<j ,
\, w_i\ne w_j }\left(w_i-w_j\right)} 
\end{equation}
where $\gamma=m-r-1$ and the ($m\times m$) matrix ${\bf A}$ has
elements as follows
{
\begin{equation}\label{eq:A1F0chiani}
a_{i,j}=\left\{
  \begin{array}{ll}
    \lambda_i^{L-1+k-j} \left(1-\lambda_i w_j\right)^{\gamma-(L-1+k-j)}  & j=k,\ldots,k+L-1 \\
    \left(1-\lambda_i w_j\right)^{\gamma} & \text{elsewhere.}
  \end{array}
  \right.
\end{equation}
}
In other words, the matrix $\bf A$ is the same as that appearing in
\eqref{eq:1F0XYkhatri}, except that the $L$ columns corresponding to the $L$ coincident eigenvalues are
$\lambda^{L-1}_i \left(1-\lambda_i w_k\right)^{\gamma-(L-1)},$ $
\ldots, \lambda_i \left(1-\lambda_i w_k\right)^{\gamma-1},$ $
\left(1-\lambda_i w_k\right)^{\gamma}$.
%
%
\end{lemma}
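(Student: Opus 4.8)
The plan is to derive Lemma~\ref{lemma: lemma2chiani} from Khatri's identity \eqref{eq:1F0XYkhatri} by the same continuous-extension argument that produced Lemma~\ref{lemma: lemma1chiani} from \eqref{eq:0F0XYkhatri}. In \eqref{eq:1F0XYkhatri} the $(i,j)$ entry of ${\bf G}_1$ is $\left(1-\lambda_i w_j\right)^{\gamma}$ with $\gamma=m-r-1$, so for distinct $w_1>\cdots>w_m$ one has ${}_1\td{F}_0(r;{\bf\Lambda},{\bf W})=\frac{\Gamma_{(m)}(m)}{\psi_1^{(m)}(r)}\,P(w_1,\ldots,w_m)$, where $P$ is the ratio in \eqref{eq:defPw1wm} built from the functions $f_i(w)=\left(1-\lambda_i w\right)^{\gamma}$. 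Since the left-hand side is continuous in $(w_1,\ldots,w_m)$, its value at a point with $w_k=w_{k+1}=\cdots=w_{k+L-1}$ equals $\frac{\Gamma_{(m)}(m)}{\psi_1^{(m)}(r)}\,\breve P$, and Lemma~\ref{lemma: lemma0chiani} dictates how $\breve P$ is formed: delete the vanishing factors $\prod_{i<j,\,w_i=w_j}(w_i-w_j)$ from the denominator, replace the $L$ columns attached to the repeated argument by the derivative columns $f_i^{(L-l)}(w_k)$, $l=1,\ldots,L$, and divide by $\Gamma_{(L)}(L)$.

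Next comes the elementary differentiation $f_i^{(n)}(w)=\left(-\lambda_i\right)^{n}\gamma(\gamma-1)\cdots(\gamma-n+1)\left(1-\lambda_i w\right)^{\gamma-n}$, valid for all $n\le m-1$ in a neighborhood of $w_k$ whenever $f_i$ is smooth there (in particular always when $\gamma$ is a nonnegative integer, in which case $f_i$ is a polynomial; otherwise one argues off the measure-zero set $\{1-\lambda_i w_k=0\}$ and extends by continuity of both sides). With the replacement columns written in the order prescribed by Lemma~\ref{lemma: lemma0chiani}, the entry in column $j\in\{k,\ldots,k+L-1\}$ is $f_i^{(n)}(w_k)$ with $n=L-1+k-j$, i.e. $\left(-1\right)^{n}\gamma(\gamma-1)\cdots(\gamma-n+1)\,\lambda_i^{\,n}\left(1-\lambda_i w_k\right)^{\gamma-n}$. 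The pure scalar $\left(-1\right)^{n}\gamma(\gamma-1)\cdots(\gamma-n+1)$ depends only on the column index and hence factors out of the determinant, leaving the column entry $\lambda_i^{\,n}\left(1-\lambda_i w_k\right)^{\gamma-n}=\lambda_i^{\,L-1+k-j}\left(1-\lambda_i w_j\right)^{\gamma-(L-1+k-j)}$ — exactly column $j$ of the matrix ${\bf A}$ in \eqref{eq:A1F0chiani} (using $w_j=w_k$ there).

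What remains is the bookkeeping of the extracted scalars, which I expect to be the only point requiring care. As $j$ runs over $k,\ldots,k+L-1$ the index $n=L-1+k-j$ runs over $L-1,L-2,\ldots,1,0$, so the accumulated sign is $(-1)^{0+1+\cdots+(L-1)}=(-1)^{(L-1)L/2}$, which is the prefactor in \eqref{eq:1F0XYchiani}; and the product of the descending factorials, $\prod_{n=0}^{L-1}\gamma(\gamma-1)\cdots(\gamma-n+1)$, telescopes — the factor $(\gamma-s)$ appears once for each $n$ with $n>s$, hence $L-1-s$ times — to $\gamma^{L-1}(\gamma-1)^{L-2}\cdots(\gamma-L+2)$, the coefficient displayed in \eqref{eq:1F0XYchiani}. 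Combining these with $\Gamma_{(m)}(m)/\bigl(\psi_1^{(m)}(r)\,\Gamma_{(L)}(L)\bigr)$, the determinant $\left|{\bf A}\right|$, and the reduced denominator $\prod_{i<j}(\lambda_i-\lambda_j)\prod_{i<j,\,w_i\ne w_j}(w_i-w_j)$ yields \eqref{eq:1F0XYchiani}. The extension to several groups of coincident eigenvalues follows verbatim from the last clause of Lemma~\ref{lemma: lemma0chiani}, since each group of multiplicity $L$ independently contributes its own sign $(-1)^{(L-1)L/2}$, descending-factorial product, and scaling $\Gamma_{(L)}(L)$.
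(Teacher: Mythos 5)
Your proposal is correct and follows essentially the same route as the paper: the paper's proof consists precisely of applying Lemma~\ref{lemma: lemma0chiani} with $f_i(w)=(1-\lambda_i w)^{\gamma}$ and its derivative $f^{(n)}_i(w)=(-\lambda_i)^n\,\gamma(\gamma-1)\cdots(\gamma-n+1)\,(1-\lambda_i w)^{\gamma-n}$, leaving the column-wise extraction of the sign $(-1)^{(L-1)L/2}$ and the factor $\gamma^{L-1}(\gamma-1)^{L-2}\cdots(\gamma-L+2)$ implicit, which you have simply carried out explicitly (and correctly). Your added remark on smoothness of $(1-\lambda_i w)^{\gamma}$ for non-integer $\gamma$ is a harmless refinement the paper omits.
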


\medskip

\begin{proof}
For the proof we apply Lemma~\ref{lemma: lemma0chiani} with
$f_i(w)=\left(1-\lambda_i w_k\right)^{\gamma}$, whose $\ith{n}$
derivative is $f^{(n)}_i(w)=(-\lambda_i)^n \gamma (\gamma-1)\cdots
(\gamma-n+1) \left(1-\lambda_i w_k\right)^{\gamma-n}$.
\end{proof}

\medskip
Lemma~\ref{lemma: lemma2chiani} can be further generalized to more
groups of coincident eigenvalues: each eigenvalue $w$ of
multiplicity $L>1$ gives rise to $L$ columns $\lambda^{L-1}_i
(1-\lambda_i w)^{\gamma-(L-1)}, \ldots, \lambda^{2}_i (1-\lambda_i
w)^{\gamma-2},\lambda_i (1-\lambda_i w)^{\gamma-1}, (1-\lambda_i
w)^{\gamma}$ in the matrix ${\bf A}$ of \eqref{eq:1F0XYchiani}, and to
a factor $(-1)^{(L-1)L/2} \gamma^{L-1}\cdots
(\gamma-L+2)/\Gamma_{(L)}(L)$.

\medskip

Using Lemma~\ref{lemma: lemma2chiani} with $k=m-L+1$ and $w_k=0$ results in the following corollary. 

\begin{corollary} \label{corollary:lemma2chiani}
Let ${\bf\Lambda}=\diag{\lambda_1,\ldots\lambda_m}$ and ${\bf
W}=\diag{w_1,\ldots w_m}$ with $\lambda_1>\cdots>\lambda_m$ and
$w_1>\cdots>w_{m-L+1}=w_{m-L+2}=\cdots=w_{m}=0$. 
Then we have that \eqref{eq:1F0XYchiani} holds, with
\begin{equation}\label{eq:A1F0chianiZeroEigen} a_{i,j}=\left\{
  \begin{array}{ll}
    \lambda_i^{m-j} & j=m-L+1,\ldots,m \\
    (1-\lambda_i w_j)^{\gamma} & \text{elsewhere.}
  \end{array}
  \right.
\end{equation}
In other words, the matrix ${\bf A}$ has in this case the last $L$
columns with elements $\lambda_i^{L-1},
\lambda_i^{L-2},\ldots,\lambda_i,1$.

\end{corollary}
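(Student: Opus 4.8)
The plan is to obtain Corollary~\ref{corollary:lemma2chiani} as a direct specialization of Lemma~\ref{lemma: lemma2chiani}, with no new machinery: one simply takes the block of $L$ coincident eigenvalues of ${\bf W}$ to be its last $L$ entries and sets their common value to zero, i.e.\ $k=m-L+1$ and $w_k=0$. Since the construction underlying Lemma~\ref{lemma: lemma2chiani} --- namely the column-derivative replacement of Lemma~\ref{lemma: lemma0chiani} applied to $f_i(w)=(1-\lambda_i w)^{\gamma}$ --- is insensitive to whether the coincident block sits in the interior of the ordered list or at its boundary, the identity \eqref{eq:1F0XYchiani} continues to hold verbatim, including its scalar prefactor $\frac{\Gamma_{(m)}(m)}{\Gamma_{(L)}(L)}\frac{(-1)^{(L-1)L/2}}{\psi_1^{(m)}(r)}\gamma^{L-1}(\gamma-1)^{L-2}\cdots(\gamma-L+2)$ and its denominator $\prod_{i<j}(\lambda_i-\lambda_j)\prod_{i<j,\,w_i\ne w_j}(w_i-w_j)$. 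Thus the only thing left to verify is how the entries of ${\bf A}$ in \eqref{eq:A1F0chiani} simplify when $w_k=0$.

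For the $L$ columns indexed by $j=k,\dots,k+L-1$, equation \eqref{eq:A1F0chiani} reads $a_{i,j}=\lambda_i^{L-1+k-j}(1-\lambda_i w_j)^{\gamma-(L-1+k-j)}$. Setting $w_j=w_k=0$ collapses the factor $(1-\lambda_i w_j)^{\gamma-(L-1+k-j)}$ to $1$, irrespective of the sign or integrality of the exponent, leaving $a_{i,j}=\lambda_i^{L-1+k-j}$. Substituting $k=m-L+1$ gives $L-1+k-j=m-j$, hence $a_{i,j}=\lambda_i^{m-j}$ for $j=m-L+1,\dots,m$, i.e.\ the last $L$ columns carry the entries $\lambda_i^{L-1},\lambda_i^{L-2},\dots,\lambda_i,1$, as asserted in \eqref{eq:A1F0chianiZeroEigen}. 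For the remaining columns $j\le m-L$ the eigenvalue $w_j$ is untouched --- and, since $w_1>\cdots>w_{m-L}>0$, these values stay distinct from one another and from $0$ --- so $a_{i,j}=(1-\lambda_i w_j)^{\gamma}$ exactly as in \eqref{eq:1F0XYkhatri}.

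It remains only to note that the denominator of \eqref{eq:1F0XYchiani} is well defined in this case: the sole coincidences among $w_1,\dots,w_m$ occur within the last $L$ arguments, which are precisely the pairs excluded by the restriction $w_i\ne w_j$, so no vanishing factor survives. There is essentially no obstacle here --- the result is pure bookkeeping --- and the only point needing a moment's care is tracking the exponent $L-1+k-j$ and confirming that it reduces to $m-j$ when $k=m-L+1$; everything else is a transcription of Lemma~\ref{lemma: lemma2chiani}.
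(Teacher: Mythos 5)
Your proof is correct and follows exactly the paper's route: the paper obtains Corollary~\ref{corollary:lemma2chiani} as an immediate specialization of Lemma~\ref{lemma: lemma2chiani} with $k=m-L+1$ and $w_k=0$, which is precisely your argument. Your bookkeeping that $(1-\lambda_i w_j)^{\gamma-(L-1+k-j)}$ collapses to $1$ and that $L-1+k-j$ reduces to $m-j$, together with the remark on the denominator $\prod_{i<j,\,w_i\ne w_j}(w_i-w_j)$, is exactly what is needed and matches the intended derivation.
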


\medskip


Finally, we give the result for the ${}_p\td{F}_q(\cdot)$.

\begin{lemma}\label{lemma: lemma3chiani}
Let ${\bf\Lambda}=\diag{\lambda_1,\ldots\lambda_m}$ and ${\bf
W}=\diag{w_1,\ldots w_m}$ with $\lambda_1>\cdots>\lambda_m$ and
$w_1>\cdots>w_k=w_{k+1}=\cdots=w_{k+L-1}>w_{k+L}>\cdots>w_m$. 
 Then we have
\begin{equation}\label{eq:pFqXYchiani}
{}_p\td{F}_q\left(a_1,\ldots,a_p; b_1,\ldots,b_q; {\bf \Lambda},
{\bf W}\right)= \Xi \, \frac{\left|{\bf
C}\right|}{\prod_{i<j}\left(\lambda_i-\lambda_j\right)\prod_{i<j,
w_i\neq w_j}\left(w_i-w_j\right)} 
\end{equation}
where the ($m\times m$) matrix ${\bf C}$ has elements as follows
%
{
\begin{equation}\label{eq:CpFqchiani}
c_{i,j}=
    \lambda_i^{L-1+k-j}{}_p\td{F}_q\left(a_1-m+L+k-j, \ldots, b_q-m+L+k-j;\lambda_i w_j\right)
\end{equation}
} for $ j=k,\ldots,k+L-1$, and
$$c_{i,j}={}_p\td{F}_q\left(\tilde{a}_1,\ldots,\tilde{a}_p; \tilde{b}_1,\ldots,\tilde{b}_q; \lambda_i
w_j \right)$$ elsewhere.
In \eqref{eq:pFqXYchiani} the constant $\Xi$ is

{
$$\Xi=\frac{\Gamma_{(m)}(m)}{\Gamma_{(L)}(L)}\frac{\psi_q^{(m)}(b)}{\psi_q^{(m)}(a)} \prod_{i=1}^{L-1} \frac{(\tilde{a}_1)_i (\tilde{a}_2)_i \cdots (\tilde{a}_p)_i}{(\tilde{b}_1)_i (\tilde{b}_2)_i \cdots
(\tilde{b}_q)_i}.
$$
}
\end{lemma}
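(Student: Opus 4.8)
The plan is to obtain Lemma~\ref{lemma: lemma3chiani} exactly as Lemma~\ref{lemma: lemma1chiani} and Lemma~\ref{lemma: lemma2chiani} were obtained: by taking the continuous extension, in the $w$-variables, of Khatri's formula \eqref{eq:pFqXYkhatri}, using Lemma~\ref{lemma: lemma0chiani} to identify that extension. For a fixed ${\bf\Lambda}=\diag{\lambda_1,\ldots,\lambda_m}$ with distinct eigenvalues, both sides of \eqref{eq:pFqXYkhatri} are continuous functions of ${\bf W}$ on the domain where the defining series converges (the left-hand side because each $C_\kappa(\cdot)$ is a polynomial in the eigenvalues, the right-hand side because Lemma~\ref{lemma: lemma0chiani} shows its apparent singularity along $w_k=\cdots=w_{k+L-1}$ is removable), and they agree whenever all $w_j$ are distinct; hence they agree by continuity, and it only remains to read off the extension from Lemma~\ref{lemma: lemma0chiani}.

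Concretely, I would first factor the fixed Vandermonde $\prod_{i<j}(\lambda_i-\lambda_j)$ out of \eqref{eq:pFqXYkhatri} and regard the remaining quotient $|{\bf G}|/\prod_{i<j}(w_i-w_j)$ as a function $P(w_1,\ldots,w_m)$ of precisely the form treated in Lemma~\ref{lemma: lemma0chiani}, with $f_i(w)\triangleq {}_p\td{F}_q(\td{a}_1,\ldots,\td{a}_p;\td{b}_1,\ldots,\td{b}_q;\lambda_i w)$, so that row $i$ of ${\bf G}$ is $(f_i(w_1),\ldots,f_i(w_m))$. These $f_i$ are analytic, hence admit derivatives of every order in a neighborhood of any point in the convergence region, so Lemma~\ref{lemma: lemma0chiani} applies: for the group $w_k=\cdots=w_{k+L-1}$ it replaces columns $j=k,\ldots,k+L-1$ by $f_i^{(L-1+k-j)}(w_k)$ (derivative orders $L-1,L-2,\ldots,0$, read left to right), removes the vanishing factors from $\prod_{i<j}(w_i-w_j)$, and divides by $\Gamma_{(L)}(L)$.

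Next I would evaluate those derivatives using the standard identity $\frac{d^{n}}{dz^{n}}{}_p\td{F}_q(a_1,\ldots;b_1,\ldots;z)=\frac{(a_1)_n\cdots(a_p)_n}{(b_1)_n\cdots(b_q)_n}\,{}_p\td{F}_q(a_1+n,\ldots;b_1+n,\ldots;z)$ together with the chain rule, giving $f_i^{(n)}(w_k)=\lambda_i^{\,n}\,\frac{(\td{a}_1)_n\cdots(\td{a}_p)_n}{(\td{b}_1)_n\cdots(\td{b}_q)_n}\,{}_p\td{F}_q(\td{a}_1+n,\ldots;\td{b}_1+n,\ldots;\lambda_i w_k)$ with $n=L-1+k-j$. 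Since $\td{a}_t+n=a_t-m+L+k-j$ and likewise for $\td{b}_s$, the factor $\lambda_i^{\,n}\,{}_p\td{F}_q(\td{a}_1+n,\ldots;\lambda_i w_j)$ is exactly the entry $c_{i,j}$ of \eqref{eq:CpFqchiani}, while the scalar ratio $\frac{(\td{a}_1)_n\cdots(\td{a}_p)_n}{(\td{b}_1)_n\cdots(\td{b}_q)_n}$ is common to all entries of column $j$ and can be pulled out of the determinant. As $j$ runs over $k,\ldots,k+L-1$ the exponent $n$ runs over $L-1,\ldots,1,0$, so the collected pulled-out factor is $\prod_{i=1}^{L-1}\frac{(\td{a}_1)_i\cdots(\td{a}_p)_i}{(\td{b}_1)_i\cdots(\td{b}_q)_i}$; multiplying this by Khatri's prefactor $\Gamma_{(m)}(m)\,\psi_q^{(m)}(b)/\psi_p^{(m)}(a)$ and by the $1/\Gamma_{(L)}(L)$ supplied by Lemma~\ref{lemma: lemma0chiani} reproduces the constant $\Xi$ in the statement. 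This settles the single-group case; iterating over disjoint groups of coincident eigenvalues (exactly as in the last paragraph of Lemma~\ref{lemma: lemma0chiani}) gives the general multiplicity statement.

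The only genuinely delicate step is the bookkeeping of these scalar prefactors: one must check that the index shift $\td{a}_t+n=a_t-m+L+k-j$ matches \eqref{eq:CpFqchiani} column by column, and that the product of the $L$ pulled-out ratios collapses to the factor $\prod_{i=1}^{L-1}\frac{(\td{a}_1)_i\cdots(\td{a}_p)_i}{(\td{b}_1)_i\cdots(\td{b}_q)_i}$ that appears in $\Xi$ (with the $i=0$ term contributing $1$). A secondary point worth a remark is the domain of validity: for the classical convergent cases, and in particular for the entire/terminating cases ${}_0\td{F}_0$ and ${}_1\td{F}_0$ already spelled out in Lemma~\ref{lemma: lemma1chiani} and Lemma~\ref{lemma: lemma2chiani}, the analyticity needed to invoke the continuous-extension argument is immediate; in general the identity is to be read in the same formal (zonal-polynomial) sense in which \eqref{eq:pFqXYkhatri} itself is stated.
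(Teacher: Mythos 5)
Your proposal is correct and takes essentially the same route as the paper's own (very terse) proof: apply Lemma~\ref{lemma: lemma0chiani} to Khatri's formula \eqref{eq:pFqXYkhatri} with $f_i(w)={}_p\td{F}_q\left(\tilde{a}_1,\ldots,\tilde{a}_p;\tilde{b}_1,\ldots,\tilde{b}_q;\lambda_i w\right)$ and use the scalar derivative identity for ${}_p\td{F}_q$. The continuity justification and the column-by-column bookkeeping that yields $\Xi$ are exactly what the paper leaves implicit, and your denominator $\psi_p^{(m)}(a)$ is Khatri's correct constant (the $\psi_q^{(m)}(a)$ in the lemma statement appears to be a typo).
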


\medskip

\begin{proof} \mynote{See Appendix \ref{app:proofs}.}
\end{proof}

\mynote{
\section{Gaussian quadratic forms with covariance matrix having eigenvalues of arbitrary multiplicity}
\label{sec:quadforms}


We now derive the joint \ac{p.d.f.} of the eigenvalues for Gaussian quadratic forms and central Wishart matrices with arbitrary one-sided correlation matrix.

\begin{lemma}\label{lemma: pdfquadraticforms}
Let $\mathbf{H}$ be a complex Gaussian $(p \times n)$ random matrix with zero-mean, unit variance, \ac{i.i.d.} entries and let $\mathbf{\Phi}$ be an $(n \times n)$
positive definite matrix. The joint \ac{p.d.f.} of the (real) non-zero ordered
eigenvalues $\lambda_1 \geq \lambda_2 \geq \ldots \geq
\lambda_{\nmin} \geq 0$ of the $(p \times p)$  quadratic form $\mathbf{W}=\mathbf{H}\mathbf{\Phi}\mathbf{H}^{\dag}$ is given by
\begin{equation} \label{eq:jpdfgeneralquadraticform}
f_{\bl{}} (x_{1}, \ldots, x_{\nmin})
    = K
    \left|{\bf V(x)}\right| \,
        \left|\tilde{\bf G}({\bf x,\bm \mu})\right| \prod_{i=1}^{\nmin} x_i^{p-\nmin}
\end{equation}
where $\nmin=\min(n,p)$, ${\bf V(x)}$ is the ($\nmin \times \nmin$) Vandermonde
matrix with elements $v_{i,j}=x^{i-1}_j$, 
\begin{equation}\label{eq:Kquadformgen} 
K= \frac{(-1)^{p(n-\nmin)}}{\Gamma_{(\nmin)}(p)} \, \frac{\prod_{i=1}^{L}\mu_{(i)}^{m_i p}}{ \prod_{i=1}^{L} \Gamma_{(m_i)}(m_i) \prod_{i<j}\left(\mu_{(i)}-\mu_{(j)}\right)^{m_i m_j}}
\end{equation}
and $\mu_{(1)} > \mu_{(2)} \ldots > \mu_{(L)}$ are the $L$
distinct eigenvalues of ${\bf \Phi}^{-1}$, with corresponding 
multiplicities $m_1, \ldots , m_L$ such that $\sum_{i=1}^{L}
m_i=n$. 

The ($n\times n$) matrix $\tilde{\bf G}({\bf x,\bm\mu})$ has elements 
\begin{equation} \label{eq:gtilde}
\tilde{g}_{i,j}=\left\{
  \begin{array}{ll}
    \left(-x_j \right)^{d_i} e^{-\mu_{(e_i)} x_j} & \qquad j=1,\ldots, \nmin \\
    \left[n-j\right]_{d_i} \mu_{(e_i)}^{n-j-d_i}& \qquad j=\nmin+1,\ldots,n \\
  \end{array}
  \right. 
\end{equation}
where $[a]_k = a(a-1)\cdots (a-k+1)$, $[a]_0 = 1$, $e_i$ denotes the unique integer such that
$$m_1+\ldots+m_{e_i-1}< i \leq m_1 +\ldots+m_{e_i} $$
and 
$$d_i=\sum_{k=1}^{e_i}m_k -i .$$

\end{lemma}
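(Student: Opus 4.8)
The plan is to obtain the density of $\mathbf{W}$ (equivalently, of an $(n\times n)$ matrix with the same non‑zero spectrum), pass to the joint density of its eigenvalues by integrating out the eigenvectors, and then invoke the determinant representation of ${}_0\td{F}_0$ with repeated arguments — Lemma~\ref{lemma: lemma1chiani}, Corollary~\ref{corollary:lemma1chiani} and the several‑group form of Lemma~\ref{lemma: lemma0chiani} — to reach \eqref{eq:jpdfgeneralquadraticform}. First, since $\H{\bf U}$ has the same law as $\H$ for every fixed unitary ${\bf U}$, we may take ${\bf \Phi}$ diagonal, so that ${\bf \Phi}^{-1}$ has the distinct eigenvalues $\mu_{(1)}>\cdots>\mu_{(L)}$ with multiplicities $m_1,\ldots,m_L$. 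The non‑zero eigenvalues of $\mathbf{W}=\H{\bf \Phi}\H^{\dag}$ coincide with those of the $(n\times n)$ matrix $\mathbf{W}'={\bf \Phi}^{1/2}\H^{\dag}\H{\bf \Phi}^{1/2}$, a central complex Wishart $\mathcal{CW}_n(p,{\bf \Phi})$, of full rank when $p\geq n$ and of rank $\nmin=p$ when $p<n$. In either case the joint density of its $\nmin$ non‑zero eigenvalues is classical: writing the relevant matrix in spectral form ${\bf U}\,\diag{x_1,\ldots,x_{\nmin},0,\ldots,0}\,{\bf U}^{\dag}$, the eigenvector integral is the unitary Harish--Chandra--Itzykson--Zuber integral $\int_{U(n)} e^{-\tr{{\bf \Phi}^{-1}{\bf U}\bar{\bf X}{\bf U}^{\dag}}}\,d{\bf U}$, proportional to ${}_0\td{F}_0(-{\bf \Phi}^{-1},\bar{\bf X})$, where $\bar{\bf X}$ is the $(n\times n)$ diagonal matrix carrying the $x_i$ and $n-\nmin$ zeros. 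Together with the eigenvalue Jacobian (a Vandermonde squared) and the power $\prod_i x_i^{|n-p|}$ — the $(\det)^{p-n}$ factor of the Wishart density when $p\geq n$, the corresponding Stiefel Jacobian factor when $p<n$ — this yields, up to a multiplicative constant,
\[
f_{\bl{}}(x_1,\ldots,x_{\nmin}) \;\propto\; \left(\prod_{i<j}(x_i-x_j)\right)^{2}\,\prod_{i=1}^{\nmin}x_i^{|n-p|}\;\; {}_0\td{F}_0\!\left(-{\bf \Phi}^{-1},\bar{\bf X}\right).
\]
(A quick check: for ${\bf \Phi}={\bf I}$ one has ${}_0\td{F}_0(-{\bf I}_n,\bar{\bf X})=e^{-\tr{\bar{\bf X}}}$ and the familiar Wishart eigenvalue density is recovered.)

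Next, ${}_0\td{F}_0(-{\bf \Phi}^{-1},\bar{\bf X})$ is evaluated by the generalized representation. Its first argument has eigenvalues of multiplicities $m_1,\ldots,m_L$, while its second argument has the distinct values $x_1>\cdots>x_{\nmin}>0$ together with a zero eigenvalue of multiplicity $n-\nmin$ (absent when $p\geq n$). Applying Lemma~\ref{lemma: lemma0chiani} to \emph{both} sets of coincident arguments — i.e.\ the several‑group extension of Lemma~\ref{lemma: lemma1chiani} for the $\mu$‑multiplicities and Corollary~\ref{corollary:lemma1chiani} for the zero eigenvalue of $\bar{\bf X}$ — gives
\[
{}_0\td{F}_0\!\left(-{\bf \Phi}^{-1},\bar{\bf X}\right)= \pm\,\frac{\Gamma_{(n)}(n)}{\left(\prod_{k=1}^{L}\Gamma_{(m_k)}(m_k)\right)\Gamma_{(n-\nmin)}(n-\nmin)}\;\; \frac{\left|\tilde{\bf G}({\bf x},\bm\mu)\right|}{\left(\prod_{k<l}(\mu_{(k)}-\mu_{(l)})^{m_k m_l}\right)\left|{\bf V}({\bf x})\right|\,\prod_{i=1}^{\nmin}x_i^{n-\nmin}},
\]
where $\tilde{\bf G}$ is exactly the matrix \eqref{eq:gtilde}: for row $i$, which lies in the block of $\mu_{(e_i)}$ and to which Lemma~\ref{lemma: lemma0chiani} assigns the $d_i$‑th $\mu$‑derivative, the columns $j=1,\ldots,\nmin$ carry $\partial_\mu^{d_i}e^{-\mu x_j}|_{\mu=\mu_{(e_i)}}=(-x_j)^{d_i}e^{-\mu_{(e_i)}x_j}$, and the columns $j=\nmin+1,\ldots,n$ (those produced for the zero eigenvalue by Corollary~\ref{corollary:lemma1chiani}) carry $\partial_\mu^{d_i}\mu^{n-j}|_{\mu=\mu_{(e_i)}}=[n-j]_{d_i}\mu_{(e_i)}^{n-j-d_i}$.

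Finally, substituting the second display into the first, $\bigl(\prod_{i<j}(x_i-x_j)\bigr)^2/\left|{\bf V}({\bf x})\right|=\left|{\bf V}({\bf x})\right|$, and $\prod_i x_i^{|n-p|}/\prod_i x_i^{n-\nmin}=\prod_i x_i^{p-\nmin}$ because $|n-p|=(n-\nmin)+(p-\nmin)$. The remaining scalar collapses to $K$ in \eqref{eq:Kquadformgen}: the Wishart normalization contributes $(\det{\bf \Phi})^{-p}=\prod_k\mu_{(k)}^{m_k p}$ and $\Gamma$‑ and $\pi$‑factors that, together with $\Gamma_{(n)}(n)/\bigl(\prod_k\Gamma_{(m_k)}(m_k)\Gamma_{(n-\nmin)}(n-\nmin)\bigr)$ above and the identity $\Gamma_{(n)}(n)=\Gamma_{(\nmin)}(n)\,\Gamma_{(n-\nmin)}(n-\nmin)$, leave precisely $1/\bigl(\Gamma_{(\nmin)}(p)\prod_k\Gamma_{(m_k)}(m_k)\bigr)$; the residual sign works out to $(-1)^{p(n-\nmin)}$ (it is $+1$ for $p\geq n$, while for $p<n$ the zero‑eigenvalue columns $\lambda_i^{n-j}=(-\mu_i)^{n-j}$ supply the extra sign, so that the product $\left|{\bf V}({\bf x})\right|\left|\tilde{\bf G}\right|$ and the constant together keep the density positive). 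That the expression integrates to one fixes the same constant, completing the proof. Alternatively, one may start from the joint eigenvalue density for distinct ${\bf \Phi}$‑eigenvalues in \cite{ChiWinZan:J03,SmiRoySha:J03,ShiWinLeeChi:J05} and pass to the limit of coalescing eigenvalues, the continuous extension again being governed by Lemma~\ref{lemma: lemma0chiani}.

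The main obstacle is the bookkeeping in the last two steps: one must apply the continuous‑extension lemma simultaneously to coincident arguments on \emph{both} sides of ${}_0\td{F}_0$ while keeping the per‑row derivative structure straight so that $\tilde{\bf G}$ emerges exactly as in \eqref{eq:gtilde}, and then verify that all the $\Gamma_{(\cdot)}(\cdot)$, powers of $\pi$, unitary‑group volumes, Jacobian factors and the signs coming from $\lambda_i=-\mu_i$ and from reorderings of the $\mu$‑blocks amalgamate into the compact constant \eqref{eq:Kquadformgen} — in particular, that the residual sign is exactly $(-1)^{p(n-\nmin)}$.
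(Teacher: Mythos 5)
Your proposal is correct and follows essentially the same route as the paper: express the eigenvalue density of the quadratic form through ${}_0\td{F}_0\left(\mathbf{\Phi}^{-1},-\mathbf{W}\right)$ with $n-\nmin$ padded zero eigenvalues, evaluate it by the generalized determinant representations (Corollary~\ref{corollary:lemma1chiani} for the zeros, Lemma~\ref{lemma: lemma0chiani} applied row-wise in the $\mu$'s for the repeated eigenvalues of $\mathbf{\Phi}^{-1}$), and collect the constants via $\Gamma_{(n)}(n)=\Gamma_{(\nmin)}(n)\,\Gamma_{(n-\nmin)}(n-\nmin)$. The only organizational difference is that you apply the coalescence machinery to both argument matrices in one step, whereas the paper first writes the distinct-eigenvalue densities (quoting the $p\geq n$ case and deriving the $n\geq p$ case via Corollary~\ref{corollary:lemma1chiani}) and then lets the $\mu$'s coalesce through the ratio $\left|{\bf G}\right|/\prod_{i<j}\left(\mu_i-\mu_j\right)$, which is the same argument.
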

\begin{proof} \mynote{See Appendix \ref{app:proofs}.}
\end{proof}

Note that Lemma \ref{lemma: pdfquadraticforms} gives, in a compact form, the general joint distribution for the eigenvalues of a central Wishart ($p\geq n$), and central pseudo-Wishart or quadratic form ($n\geq p$), with arbitrary one-sided correlation matrix with not-necessarily distinct eigenvalues.

In fact, Lemma \ref{lemma: pdfquadraticforms} can be used for both $p\geq n$ and $n\geq p$; in particular, for $n\geq p$  we have $\prod_{i=1}^{\nmin} x_i^{p-\nmin}=1$ in \eqref{eq:jpdfgeneralquadraticform}, while for $p\geq n$ the second row in \eqref{eq:gtilde} disappears and $(-1)^{p(n-\nmin)}=1$ in \eqref{eq:Kquadformgen}.
}

Moreover, using Lemma \ref{lemma: pdfquadraticforms} and the results in \cite{ZanChiWin:J09,ChiZan:C08} we can also derive the marginal distribution of individual eigenvalues or of an arbitrary subset of the eigenvalues.




\section{\mynote{Ergodic Mutual information} 
 of a single-user MIMO system} 
 \label{sec:unifMIMOcap}
In this section we provide a unified analysis of the ergodic \mynote{mutual information} of a single-user
\ac{MIMO} system with arbitrary power levels/correlation among the
transmitting antenna elements or arbitrary correlation at the receiver, admitting correlation matrices with not-necessarily distinct eigenvalues. %

\mynote{Let us consider the function 
\begin{equation}\label{eq:defCgeneric}
\mathcal{C}_{\mathrm{SU}}\left(n, p, {\bf
\Phi}\right)=\EXs{\mathbf{H}}{\log\det\left({\bf I}_{p}+ {\H
{\bf \Phi} \H^{\dag} }\right)}
\end{equation}
where ${\bf \Phi}$ is a generic $(n \times n)$ positive definite matrix 
and $\mathbf{H}$ is a $(p \times n)$ random matrix with  zero-mean, unit variance complex Gaussian \ac{i.i.d.} entries.

Now, consider a single-user MIMO-$(\nt, \nr)$ Rayleigh fading channel 
 with ${\bf \Psi}_{\text{T}}, {\bf \Psi}_{\text{R}}$ denoting the $(\nt \times \nt)$ transmit and $(\nr \times \nr)$ receive correlation matrices, respectively, having diagonal elements equal to one. Assume the transmit vector ${\bf x}$ is  zero-mean complex Gaussian, with arbitrary (but fixed) $(\nt \times \nt)$ covariance matrix ${\bf Q}=\EX{\bf x  x ^{\dag} }$ so that $\tr{\bf Q}=\Pt$.
Then, the function \eqref{eq:defCgeneric} can be used to express the ergodic mutual information in the following cases  \cite{ChiWinZan:J03,SmiRoySha:J03,ShiWinLeeChi:J05}:
\begin{enumerate}
\item the MIMO-$(\nt, \nr)$ channel with no correlation at the receiver (${\bf \Psi}_{\text{R}}={\bf I}$), covariance matrix at the transmitter side ${\bf \Psi}_{\text{T}}$,  and transmit covariance matrix ${\bf Q}$. 

In this case the mutual information is  $\mathcal{C}_{\mathrm{SU}}\left(\nt, \nr,{\bf \Phi}\right)$ with ${\bf \Phi}=({1}/{\sigma^2}) {\bf \Psi}_{\text{T}} {\bf Q}$.  If also ${\bf \Psi}_{\text{T}}={\bf I}$, we have ${\bf \Phi}=({1}/{\sigma^2}) {\bf Q}$ and therefore $\tr{\bf \Phi}=\Pt/\sigma^2$.

\item the MIMO-$(\nt, \nr)$ channel with no correlation at the transmitter (${\bf \Psi}_{\text{T}}={\bf I}$), covariance matrix at the receiver side ${\bf \Psi}_{\text{R}}$, and equal power allocation ${\bf Q}=\Pt/\nt {\bf I}$. 

In this case the capacity is  $\mathcal{C}_{\mathrm{SU}}\left(\nr, \nt,{\bf \Phi}\right)$ with ${\bf \Phi}=({\Pt}/{\nt \sigma^2}) {\bf \Psi}_{\text{R}}$, giving $\tr{\bf \Phi}=({\Pt}/{\sigma^2}) ({\nr}/{\nt})$, in accordance to \cite[Theorem  1]{ChiWinZan:J03}.

\end{enumerate}

In both cases $\Pt/\sigma^2$ represents the \ac{SNR} per receiving antenna. 
}


%

By indicating \mynote{with $\nmin=\min(n,p)$ and}  with $f_{\bl{}} (\cdot, \ldots, \cdot)$ the joint
\ac{p.d.f.} of the (real) ordered \mynote{non-zero} eigenvalues $\lambda_1 \geq
\lambda_2 \geq \ldots \geq \lambda_{\nmin} > 0$ of the $(p
\times p)$ random matrix ${\bf W}={\H {\bf \Phi} \H^{\dag} }$, 
we can write:
\begin{eqnarray}\label{eq:defCmediaeigenv2}
\mathcal{C}_{\mathrm{SU}}\left(n, p, {\bf
\Phi}\right)&=&\EX{\sum_{i=1}^{\nmin}\log\left(1+\lambda_i\right)}
\nonumber
\\
&=&\int \cdots \int_{\Do} f_{\bl{}} (x_{1}, \ldots,
x_{\nmin})\sum_{i=1}^{\nmin}\log\left(1+x_i\right)d\mathbf{x}
\end{eqnarray}
where the multiple integral is over the domain $\Do=\left\{ \infty
> x_1 \geq x_2 \geq \ldots \geq x_{\nmin} > 0 \right\}$ and
$d{\bf x}=dx_1 \, dx_2 \cdots dx_{\nmin}$.

The nested integral in $\eqref{eq:defCmediaeigenv2}$ can be
evaluated using the results from previous sections 
 and Appendix~\ref{app:identitymult}, leading to the following
Theorem.

\begin{theorem}\label{th:meancap}

The ergodic \mynote{mutual information} of a MIMO Rayleigh fading
channel with CSI at the receiver only and one-sided correlation matrix ${\bf \Phi}$ having eigenvalues of
arbitrary multiplicities is given by
%
\begin{equation}\label{eq:meanCmimo}
\mathcal{C}_{\mathrm{SU}}\left(n, p, {\bf \Phi}\right)=K
\sum_{k=1}^{\nmin}\det\left({\bf R}^{(k)}\right).
\end{equation}
\medskip
In the previous equation $\nmin=\min(n,p)$,  
%
%
the matrix ${\bf R}^{(k)}$ has elements
%
\begin{equation}\label{eq:meanCmimoPsi}
r^{(k)}_{i,j}=\begin{cases}
    (-1)^{d_i}\int_0^{\infty}x^{p-\nmin+j-1+d_i} e^{- x \, \mu_{(e_i)}}  dx & j=1, \ldots, \nmin, \, j \neq k \\

    (-1)^{d_i}\int_0^{\infty}x^{p-\nmin+j-1+d_i} e^{- x \, \mu_{(e_i)}} \log\left(1+x\right) dx & j=1, \ldots, \nmin, \, j=k \\

    \, [ n-j ]_{d_i} \,\,\mu_{(e_i)}^{n-d_i-j} & j=\nmin+1, \ldots, n  
  \end{cases}
\end{equation}
%
and  $[a]_k$, $e_i$, $d_i$, $K$ are defined as in Lemma \ref{lemma: pdfquadraticforms}, where  $\mu_{(1)} > \mu_{(2)} \ldots > \mu_{(L)}$ are the $L$ distinct eigenvalues of ${\bf \Phi}^{-1}$, with corresponding multiplicities $m_1, \ldots , m_L$.


\end{theorem}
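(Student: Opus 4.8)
The plan is to start from the eigenvalue representation \eqref{eq:defCmediaeigenv2} and substitute the joint p.d.f.\ \eqref{eq:jpdfgeneralquadraticform} from Lemma~\ref{lemma: pdfquadraticforms}. This gives
$$
\mathcal{C}_{\mathrm{SU}}\left(n, p, {\bf \Phi}\right)=K \int\cdots\int_{\Do}
\left|{\bf V(x)}\right|\,\left|\tilde{\bf G}({\bf x,\bm\mu})\right|
\prod_{i=1}^{\nmin} x_i^{p-\nmin}\,\sum_{i=1}^{\nmin}\log(1+x_i)\,d\mathbf{x}.
$$
The first step is to deal with the ordering constraint: since the integrand is a product of a Vandermonde determinant $\left|{\bf V(x)}\right|$ (antisymmetric in $x_1,\ldots,x_{\nmin}$) and the determinant $\left|\tilde{\bf G}({\bf x,\bm\mu})\right|$, whose first $\nmin$ columns depend on the $x_j$ and whose last $n-\nmin$ columns are constant, together with the symmetric factors $\prod x_i^{p-\nmin}$ and $\sum_i \log(1+x_i)$, the whole integrand is symmetric under permutations of the $x_j$ (the two determinants pick up the same sign). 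Hence I can replace $\int_{\Do}$ by $\frac{1}{\nmin!}\int_{(0,\infty)^{\nmin}}$ over the full hypercube.

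The second step is to absorb $\prod_i x_i^{p-\nmin}$ and the row-dependent sign $(-1)^{d_i}$ and power $(-x_j)^{d_i}$ from \eqref{eq:gtilde} into a single modified matrix, and then apply the Andréief (Cauchy--Binet) identity for integrals of products of two determinants. Writing $\left|{\bf V(x)}\right|=\det[\,x_j^{\,i-1}\,]_{i,j=1}^{\nmin}$ and expanding $\left|\tilde{\bf G}\right|$ along its last $n-\nmin$ constant columns, the integral factors as a sum over which of the $n$ rows supply the "variable" part. More cleanly: group the integrand so that for each pair (variable column $j$ of $\tilde{\bf G}$, column $i-1$ of the Vandermonde) one gets a one-dimensional integral $\int_0^\infty x^{p-\nmin+j-1+d_{\cdot}} e^{-\mu_{(\cdot)} x}\,dx$, and the presence of exactly one factor $\log(1+x_k)$ in the summand means that in the $k$-th term of $\sum_{k=1}^{\nmin}$ precisely one such integral is replaced by its $\log$-weighted version. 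This is exactly the structure recorded in \eqref{eq:meanCmimoPsi}: the three cases of $r^{(k)}_{i,j}$ correspond to the ordinary Gamma-type integral ($j\neq k$, $j\le\nmin$), the $\log$-weighted integral ($j=k$), and the constant entries inherited verbatim from the second row of \eqref{eq:gtilde} ($j>\nmin$). The factor $1/\nmin!$ produced by symmetrization cancels against the $\nmin!$ ways the single $\log$ factor can be distributed, leaving the bare sum $\sum_{k=1}^{\nmin}\det({\bf R}^{(k)})$.

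The third step is bookkeeping: check that the constant $K$ from Lemma~\ref{lemma: pdfquadraticforms} is reproduced unchanged (it does not interact with the integration), that the row indices $e_i$, $d_i$ defined via the multiplicity blocks $m_1,\ldots,m_L$ match between \eqref{eq:gtilde} and \eqref{eq:meanCmimoPsi}, and that the $(-1)^{d_i}$ signs combine correctly with $(-x_j)^{d_i}$ to give $(-1)^{d_i} x^{\,p-\nmin+j-1+d_i}$ inside the integrals. I would also note that the Gamma integrals converge because $\mu_{(e_i)}>0$ (as ${\bf\Phi}$ is positive definite) and the $\log$-weighted integrals converge at infinity for the same reason and at the origin since the exponent $p-\nmin+j-1+d_i\ge 0$. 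I expect the main obstacle to be the linear-algebra step of turning "determinant times Vandermonde, integrated, with one logarithm inserted" into the clean single sum $\sum_k\det({\bf R}^{(k)})$: one must argue that differentiating (or perturbing) one column is equivalent to summing over which column carries the $\log$, i.e.\ that $\sum_{i=1}^{\nmin}\log(1+x_i)$ acting on a determinant with a Vandermonde factor produces $\sum_{k=1}^{\nmin}$ of the determinant with the $k$-th column replaced by its $\log$-weighted analogue — a standard but slightly delicate multilinearity argument that must be combined carefully with the Andréief expansion and the symmetrization constant. Once that identification is made, everything else is routine evaluation of the entries, which I would relegate to Appendix~\ref{app:identitymult}.
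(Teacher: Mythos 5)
Your proposal is correct and follows essentially the same route as the paper: substitute the joint eigenvalue p.d.f.\ of Lemma~\ref{lemma: pdfquadraticforms} into \eqref{eq:defCmediaeigenv2} and collapse the multiple integral via the determinant-integral identity of Appendix~\ref{app:identitymult} (Theorem~\ref{th:meansum}), with the $\sum_i\log(1+x_i)$ factor handled by replacing one column per term, which is exactly where the $\sum_{k=1}^{\nmin}\det({\bf R}^{(k)})$ structure comes from. The only cosmetic difference is that you symmetrize to the unordered domain with a $1/\nmin!$ factor before applying an Andr\'eief-type identity, whereas the paper's appendix identity is stated and used directly on the ordered domain $\Do$.
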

\begin{proof}


In Section~\ref{sec:quadforms} it is shown that the joint
\ac{p.d.f.} of the ordered eigenvalues of $\mathbf{W}$ can be
written as \eqref{eq:jpdfgeneralquadraticform}, where the elements of ${\bf V(x)}, {\bf \tilde{G}(x,\bm \mu)}$
are real functions of $x_{1}, \ldots, x_{\nmin}$. Thus, by using 
Appendix~\ref{app:identitymult}, the multiple integral in
\eqref{eq:defCmediaeigenv2} reduces to
 \eqref{eq:meanCmimo}.
\end{proof}
%
%

Note that the integral in \eqref{eq:meanCmimoPsi} can be evaluated easily with standard numerical techniques; however, the integral can be further simplified, using the identities
$\int_{0}^{\infty}x^m e^{-x\mu} dx= m! / \mu^{m+1}$, and
$\int_{0}^{\infty} x^m e^{-x\mu} \ln(1+x)dx = m! \, e^{\mu}
\sum_{i=0}^{m}\Gamma(i-m,\mu)/\mu^{i+1}$, where
$\Gamma(\cdot,\cdot)$ is the incomplete Gamma function. 

\mynote{
Theorem \ref{th:meancap} gives, in a unified way, the exact mutual information for MIMO systems, encompassing the cases of $\nr\geq\nt$ and $\nt\geq\nr$ with arbitrary correlation at the transmitter or at the receiver, avoiding the need for Monte Carlo evaluation.
}
The application of the results in Sections~\ref{sec:appendixhypfun}-\ref{sec:unifMIMOcap} enables a unified analysis  for MIMO systems, which allow the generalization \mynote{for ergodic and outage capacity}  \cite{ChiWinZan:J03,SmiRoySha:J03, ShiWinLeeChi:J05,McKCol:05}, for optimum combining multiple antenna systems \cite{ChiWinZanMalWint:J03, ChiWinZan:J03a}, for MIMO-MMSE systems \cite{ZanChiWin:J05}, for MIMO relay networks \cite{WanZhaHos:05,BolNabOymPau:06}, as well as for multiuser MIMO systems and for distributed MIMO systems, accounting arbitrary covariance matrices.
\mynote{For example, after the first derivation of the hypergeometric functions of matrices with non-distinct eigenvalues in \cite{ChiWinShi:C06}, other applications to multiple antenna systems have appeared in \cite{JinMcKGaoCol:08,ShiWin:08,KanKwaPraStu:08,McKZanColChi:J09,ZanChiWin:J09}. }


\section{Numerical Results} \label{sec:MIMOCCIcap}
%



\mynote{Let us first apply Theorem~\ref{th:meancap} to the 
analysis of a single-user MIMO system with
unequal power levels among the transmitting antennas. 
%
}
Figure \ref{fig:ECnt6nr3nq2} shows the ergodic \mynote{mutual information}\footnote{For the numerical results we use the base 2 of logarithm in all formulas, giving a mutual information in bits/s/Hz.}  of a
MIMO-$(6,3)$ Rayleigh channel, where the relative transmitted
power levels are $\{1+\Delta, 1+\Delta, 1+\Delta, 1-\Delta,
1-\Delta, 1-\Delta \}$. 
The particular cases $\Delta=0$ and
$\Delta=1$ are equivalent to the equal power levels over $6$ and
$3$ transmitting antennas, respectively.
This figure shows how the capacity decreases as $\Delta$ increases
from $0$ to $1$, \mynote{with a behavior in accordance to analysis based on majorization theory \cite{MarOlk:B79}}.

%
%

\mynote{
As another example of application, we evaluate the performance of \ac{MIMO} relay networks in Rayleigh fading\cite{WanZhaHos:05,BolNabOymPau:06}. For such networks 
the network capacity is upper bounded by \cite[eq. (5)]{BolNabOymPau:06}, which can be easily put in the form $C_u=\frac{1}{2} \EXs{\mathbf{H}}{\log\det\left({\bf I}+ {\H {\bf \Phi} \H^{\dag} }\right)}$, and evaluated in closed form by Theorem~\ref{th:meancap}.
In Fig. \ref{fig:MIMOrelay} we report the exact $C_u$ as obtained from Theorem~\ref{th:meancap}, compared with the Jensen's inequality \cite[Theorem 1]{BolNabOymPau:06}. The figure has been obtained for a source node with $4$ antennas, $5$ relays each equipped with $2$ antennas, as a function of the total equivalent \ac{SNR} here defined as $\textsf{SNR}=\tr{{\bf \Phi}}$. We assume, for the $5$ relays, that the received power is distributed proportionally to the weights $\{1,2,5,10,20\}$.
It can observed that the results based on the Jensen's inequality can be overly optimistic.
}


As a third example of application we evaluate, using \eqref{eq:meanCapMIMOCCI} together with
Theorem~\ref{th:meancap}, the exact expression of the
ergodic \mynote{mutual information} of MIMO systems in the presence of multiple MIMO
 interferers in Rayleigh fading. In particular, the
eigenvalues to be used in Theorem~\ref{th:meancap} are given by
$\mu_{(i)}=1/\r_i=\sigma^2 \Nt_i/\Pt_i$, allowing an easy analysis
for several scenarios.
We define the average \ac{SNR} per receiving antenna as $\textsf{SNR}=\Pt_0/\sigma^2$ giving $\r_0=\textsf{SNR}/\Nt_0$, and the \ac{SIR} as
$\textsf{SIR}=\Pt_0/\sum_{i\geq 1}\Pt_i$.\footnote{We recall that, 
with our normalization on the channel gains, the mean received
power from user $i$ is $\Pt_i$, and our definition of $\mathsf{SIR}$ account for the
$total$ interference power.} 
%
%
Fig.~\ref{figuraMIMOCCISNR10dbNR6.eps} shows the
ergodic \mynote{mutual information} for a MIMO-$(6,6)$ system as a function of the
\ac{SIR}, in the presence of one MIMO cochannel interferer having
$\Nt_1$ equal power transmitting antennas. 
%
It can be noted that the capacity decreases with the increase in
the number of interfering antenna elements, tending to the curve
obtained by using the Gaussian approximation.\footnote{With
Gaussian approximation the performance is evaluated as if
interference were absent, except the overall noise power is set to
$\sigma^2+\sum_{i\geq 1}\Pt_i$, giving a signal-to-interference-plus-noise ratio $ \textsf{SINR}
=\left(\frac{1}{\textsf{SNR}}+\frac{1}{\textsf{SIR}}\right)^{-1}
$ .}
Despite the fact that the received vector $\mathbf{y}$ in
\eqref{eq:y} is Gaussian conditioned on the channel matrices, and that the elements of 
$\mathbf{H}_k$ are Gaussian, 
approximating the cumulative interference as a spatially white
complex Gaussian vector is pessimistic 
 for analyzing MIMO systems in the presence of interference, unless the number of transmitting antenna of the
interferer is large compared with that of the desired user. This is
because the Gaussian approximation implicitly assumes that the
receiver does not exploit the \ac{CSI} of the interferers (single-user receiver), whereas the exact capacity accounts for the
knowledge of all \ac{CSI} at the receiver.
In the same figure we also report, using circles, the capacity 
of a single-user MIMO-$(\Nt_0,\Nr-\Nt_1)$  for $\Nr > \Nt_1$. It
can be observed that the capacity of the MIMO-$(\Nt_0,\Nr)$ in the
presence of $\Nt_1$ interfering antenna elements approaches
asymptotically, for large interference power, 
 to 
a floor given by the capacity of a single-user
MIMO-$(\Nt_0,\Nr-\Nt_1)$ system. This behavior can be thought of
as using $\Nt_1$ \ac{DoF} at the receiver to null the
interference in a small \ac{SIR} regime.
On the other hand, when $\Nr\leq\Nt_1$ the capacity approaches to
zero for small \ac{SIR}.
This is due to the limited \ac{DoF} at the receiver (related to
the number $\Nr$ of receiving antenna elements) that prevents
mitigating all interfering signals (one from each antenna
elements
) while, at the
same time, processing the $\Nt_0$ useful parallel streams, \mynote{as previously observed for multiple antenna systems with optimum combining} 
\cite{WintSalGit:94,ChiWinZanMalWint:J03,ChiWinZan:J03a}.


Finally, in Fig.~\ref{figuraMIMOnR6CCISNR10dbONEandTWOandGAUSINTERF.eps} we
consider a MIMO-$(\Nt_0,6)$ system in the presence of one and two MIMO
interferers in the network, each equipped with the same number of
antennas as for the desired user. 
%
We clearly see here two different regions: for small \ac{SIR} the
interference effect is dominant, and it is better for
all users to employ the minimum number of transmitting antennas
(i.e., MIMO-$(3,6)$ for all users), so as to allow the receiver to
mitigate the interfering signals. On the contrary, for large 
\ac{SIR} the channel tends to that of a single-user MIMO system
and it is better to employ the maximum number of transmitting
antennas.
%
%
In the same figure we also report the capacity for
interference-free channels, which represents the asymptotes of the
four curves, 
%
%
%
%
%
as well as the Gaussian 
approximation, which incorrectly indicates that it is always better to use
the largest possible number of transmitting antennas.
%
%

%
\mynote{It can be also verified that,} in a network where all nodes are using the same MIMO-$(n,n)$
systems, larger values of $n$ achieve higher \mynote{mutual information}, for all
values of \ac{SIR} and \ac{SNR}.
\mynote{
Note, however, that when increasing the number of antennas and users, correlation may arise in the channel matrices.
}

\section{Conclusion}\label{sec:conclusion}
\mynote{We have studied MIMO communication systems in the presence of multiple MIMO
interferers and noise.} To this aim, we \mynote{first} generalized the determinant
representations for hypergeometric functions with matrix arguments
to the case where the eigenvalues of the argument matrices have 
arbitrary multiplicities. 
\mynote{
Then, we derived a unified formula for the joint \ac{p.d.f.} of the eigenvalues for central Wishart matrices and Gaussian quadratic forms, allowing arbitrary multiplicities for the covariance matrix eigenvalues. 
} 
These new results enable the analysis of
many scenarios involving MIMO systems.
\mynote{
For example, we derived a unified expression for the ergodic mutual information of MIMO Rayleigh fading channels,  which applies to transmit or receive correlation matrices with eigenvalues of arbitrary multiplicities.}
\mynote{We have shown how to apply the new expressions to MIMO networks,  deriving in closed form the
ergodic mutual information of MIMO systems in the presence of multiple MIMO
interferers.}


\begin{appendices}

\section{Proofs} 
\label{app:proofs}


\subsection{Proof of Lemma  \ref{lemma: lemma0chiani} } 

%
For ease of notation and without loss of generality we consider
the case $K=1$, where the application of the lemma leads to
\eqref{eq:theorem0}. For the proof we proceed by induction.
%
%
%
First, the result in \eqref{eq:theorem0} is obvious for
$L=1$, since in this case \eqref{eq:theorem0} coincides with
\eqref{eq:defPw1wm}.
%
%
%
Then, we must show that if \eqref{eq:theorem0} is true for any 
 $L$ then it is also true for $L+1$. So, assuming that
\eqref{eq:theorem0} holds for $L$, we must find
$$\lim_{w_{L+1} \rightarrow w_{L}} \breve{P}(w_1,\ldots,w_m).$$



In this regard note that, with $w_1=w_2=\cdots=w_L$ the product
$\prod_{i<j, w_i \neq w_j}(w_i-w_j)$ in \eqref{eq:theorem0}
contains exactly $L$ factors with value $\epsilon \teq
w_L-w_{L+1}$. Thus, by rewriting $w_{L+1}=w_{L}-\epsilon$ we have

\bigskip
 \lefteqn{\lim_{w_{L+1} \rightarrow w_{L}} \breve{P}(w_1,\ldots,w_m)
=\frac{1}{\prod_{i<j, w_i \neq w_j, j\neq L+1}(w_i-w_j)
\prod_{i=1}^{L-1} i!}\cdot} 
\begin{equation}\label{eq:theorem0p2}
\cdot \lim_{\epsilon\rightarrow 0} \frac{1}{\epsilon^L} \left|\begin{array}{llllllll} 
    f^{(L-1)}_1(w_L) & \cdots & f_1(w_L) & f_1(w_{L}-\epsilon) & \cdots & f_1(w_m) \\
                      &       & &            &        &         \\
    \vdots            & & \vdots  &           & \cdots & \vdots  \\
                      &        & &            &        &         \\
    f^{(L-1)}_m(w_L) & \cdots & f_m(w_L) & f_m(w_{L}-\epsilon) & \cdots & f_m(w_m) \\
  \end{array}\right|\,.
\end{equation}
%
We can now apply the Taylor expansion to the functions
\begin{equation}\label{eq:taylor}
  f_i(w-\epsilon)=\sum_{n=0}^{L} f^{(n)}_i(w) \frac{(-\epsilon)^n}{n!}+O(\epsilon^{L+1}),
\end{equation}
where $O(\epsilon)$ denotes omitted terms of order $\epsilon$. We
also know from basic algebra that, seen as a function of a column
with the others fixed, the determinant is a linear function of the
entries in the given column, as is clear for example from the
Laplace expansion. Therefore, we have

\bigskip

{
\lefteqn{\lim_{w_{L+1} \rightarrow w_{L}}
\breve{P}(w_1,\ldots,w_m) = }

{
$$=\frac{1}{ \prod_{i<j, w_i \neq w_j, j\neq
L+1}(w_i-w_j) \prod_{i=1}^{L-1} i!}\cdot \lim_{\epsilon
\rightarrow 0}\left(O(\epsilon)+\frac{1}{\epsilon^L}
\sum_{n=0}^{L} \frac{(-\epsilon)^n}{n!} \right.$$}}

{
\begin{equation}\label{eq:theorem0p3}
\left. \cdot   \left|\begin{array}{llllllll} 
    f^{(L-1)}_1(w_L) & \cdots & f_1(w_L) & f^{(n)}_1(w_{L}) & \cdots & f_1(w_m) \\
                      &        & &            &        &         \\
    \vdots            & &\vdots  &           & \cdots & \vdots  \\
                      &        & &            &        &         \\
    f^{(L-1)}_m(w_L) & \cdots & f_m(w_L) & f^{(n)}_m(w_{L}) & \cdots & f_m(w_m) \\
  \end{array}\right|\right)\,.
\end{equation}
}

In the summation above the determinants for $n=0,\ldots,L-1$ are
zero since there are coincident columns. So, in the limit for
${\epsilon \rightarrow 0}$ only the term of grade $L$ remains.

%
%
%
%

By simplifying  and reordering the first $L+1$ columns of the
matrix in \eqref{eq:theorem0p3}, with a cyclic permutation having
sign equal to $(-1)^L$, we finally have

\bigskip

\lefteqn{\lim_{w_L \rightarrow w_{L+1}} \breve{P}(w_1,\ldots,w_m)
=\frac{1}{ \prod_{i<j, w_i \neq w_j}(w_i-w_j) \prod_{i=1}^{L}
i!}\cdot}

{
\begin{equation}\label{eq:theorem0p5}
\cdot  \left|\begin{array}{llllllll} 
    f^{(L)}_1(w_{L+1}) & \cdots & f_1(w_{L+1}) & f_1(w_{L+2}) & \cdots & f_1(w_m) \\
                      &       & &             &        &         \\
    \vdots            & \vdots &&            & \cdots & \vdots  \\
                      &       & &             &        &         \\
    f^{(L)}_m(w_{L+1}) & \cdots & f_m(w_{L+1}) & f_m(w_{L+2}) & \cdots & f_m(w_m) \\
  \end{array}\right|\,
\end{equation}
}

\normalsize

which is again in the form of \eqref{eq:theorem0}. This concludes
the proof by induction of Lemma~\ref{lemma: lemma0chiani} for
$w_1=\cdots=w_L$.

The extension to different $K$ and more groups of coincident
arguments is straightforward.

\subsection{Proof of Lemma \ref{lemma: lemma3chiani}.}
The derivatives of the hypergeometric function of scalar arguments can be expressed as
%

{
$$ \frac{d^n}{d z^n} {}_p\td{F}_q\left(a_1,\ldots,a_p; b_1,\ldots,b_q; z\right)=
\frac{(a_1)_n \cdots (a_p)_n}{(b_1)_n\cdots (b_q)_n}
{}_p\td{F}_q\left(a_1+n,\ldots,a_p+n; b_1+n,\ldots,b_q+n;
z\right).
$$
}
Using this result in Lemma~\ref{lemma: lemma0chiani} and
\eqref{eq:pFqXYkhatri} with
$$f_i(w)={}_p\td{F}_q\left(\tilde{a}_1,\ldots,\tilde{a}_p; \tilde{b}_1,\ldots,\tilde{b}_q; \lambda_i w
\right)$$
gives Lemma~\ref{lemma: lemma3chiani}.


\subsection{Proof of Lemma \ref{lemma: pdfquadraticforms}}
%
\mynote{Here, based on Section~\ref{sec:appendixhypfun}, we prove Lemma \ref{lemma: pdfquadraticforms}} 
 concerning the eigenvalues distribution of Gaussian
quadratic forms. The problem is related to the distribution of
random matrices of the form
$\mathbf{W}=\mathbf{H}\mathbf{\Phi}\mathbf{H}^{\dag}$, where
$\mathbf{H}$ is a Gaussian $(p \times n)$ matrix with
uncorrelated entries and $\mathbf{\Phi}$ is a $(n \times n)$
positive definite matrix that represents the covariance matrix of
the channel. The eigenvalues distribution has been studied for the
two possible cases $n\geq p$ and $p\geq n$ in
\cite{ChiWinZan:J03,SmiRoySha:J03}, assuming a covariance matrix
$\mathbf{\Phi}$ with distinct eigenvalues (i.e., unit 
multiplicity). We here generalize the results to matrices
$\mathbf{\Phi}$ with arbitrary eigenvalue multiplicities.

Let us first recall the distributions for the case of covariance
matrix with distinct eigenvalues.

\subsubsection{Correlation on the shortest side - distinct eigenvalues}
%
The case $p \geq n$ has been analyzed in \cite{ChiWinZan:J03},
where it is shown that the joint \ac{p.d.f.} of the (real) ordered
eigenvalues $\lambda_1 \geq \lambda_2 \geq \ldots \geq
\lambda_{n}$ of $\mathbf{W}$ is
\begin{equation} \label{eq:jpdfSchianiwinzan}
f_{\bl{}} (x_{1}, \ldots, x_{n})
    = \frac{1}{\Gamma_{(n)}(p)} \, \frac{\prod_{i=1}^{n}\mu_i^p}{\prod_{i<j}\left(\mu_i-\mu_j\right)}
    \left|{\bf V(x)}\right| \,
        \left|{\bf G}({\bf x,\bm \mu})\right| \prod_{j=1}^{n} x_j^{p-n}
\end{equation}
where $\mu_i$ are the $n$ distinct eigenvalues of
$\mathbf{\Phi}^{-1}$, ${\bf V(x)}$ is the ($n\times n$) Vandermonde
matrix with elements $v_{i,j}=x^{i-1}_j$ and where ${\bf G}({\bf
x,\bm\mu})$ is a ($n\times n$) matrix with elements $g_{i,j}=e^{-\mu_i x_j
}$.
\subsubsection{Correlation on the largest side - distinct eigenvalues}
%
We here \mynote{briefly} derive the joint \ac{p.d.f.} for the eigenvalues
of $\mathbf{W}$ when $\mathbf{\Phi}$ has all distinct eigenvalues
and $n\geq p$, 
based on the results in
Section~\ref{sec:appendixhypfun}. Note that this case has been
analyzed also in \cite{SmiRoySha:J03} by following a different
approach.

First we recall that, given a $(p \times n)$ random matrix
$\mathbf{H}$ with $n\geq p $ and \ac{p.d.f.}
\begin{equation}
\label{eq:pdfX} \pi^{-p n}
e^{-tr \, \mathbf{H} \mathbf{H}^\dag}
\end{equation}
the \ac{p.d.f.} of the $(p \times p)$ quadratic form
\begin{equation}
\label{eq:defXLX}
\mathbf{W}=\mathbf{H}\mathbf{\Phi}\mathbf{H}^{\dag} ,
\end{equation}
where the $(n \times n)$ matrix $\mathbf{\Phi}$ is positive
definite, is given by \cite{Kha:66, Hay:66}
\begin{equation}
\label{eq:pdfS} f(\mathbf{W})=\frac{\left|\mathbf{W}\right|^{n-p}}{\pi^{(p-1)p/2} \, \Gamma_{(p)}(n) 
\left|\mathbf{\Phi}\right|^{p}} \,\,
{}_0\td{F}_0\left(\mathbf{\Phi}^{-1}, -\mathbf{W}\right).
\end{equation}
%
Then, the joint \ac{p.d.f.} of the (real) ordered eigenvalues
$\lambda_1 \geq \lambda_2 \geq \ldots \geq \lambda_{p}$ of
$\mathbf{W}$ is given by using the results in \cite[eq.
(93)]{Jam:64} as
\begin{equation} \label{eq:jpdfS}
f_{\bl{}} (x_{1}, \ldots, x_{p})
    = K_1 \left|\mathbf{\Phi}\right|^{-p} \,
        {}_0\td{F}_0\left(\mathbf{\Phi}^{-1},-\mathbf{W}\right)
        \left|\mathbf{W}\right|^{n-p}
        \cdot \prod_{i<j}^{p} \left(x_{i}-x_{j}\right)^2\,,
\end{equation}
where
\begin{equation}
K_1 = \frac{1}
        {{\Gamma}_{p}(n){\Gamma}_{p}(p)} \, .
\label{eq:K}
\end{equation}
%
%
Note that in \eqref{eq:jpdfS} the two matrices
$\mathbf{\Phi}^{-1}$ and $\mathbf{W}$ are of dimensions $(n \times n)$
and $(p \times p)$, respectively. So, in \eqref{eq:jpdfS} we
evaluate ${}_0\td{F}_0\left(\mathbf{\Phi}^{-1},{\bf B}\right)$
where ${\bf B}=-\mathbf{W}\oplus 0\cdot \mathbf{I}_p$ is obtained
by adding $n-p$ zero eigenvalues to $-\mathbf{W}$
\cite{SmiRoySha:J03}. 

Differently from the previous literature, we
can now directly use Corollary \ref{corollary:lemma1chiani} and get immediately the 
 joint \ac{p.d.f.} of the ordered
eigenvalues of the $(p\times p)$ matrix $\mathbf{W}$ when $n\geq p$ as:
\begin{equation} \label{eq:jpdfSchiani}
f_{\bl{}} (x_{1}, \ldots, x_{p})
    = \frac{(-1)^{p(n-p)}}{\Gamma_{(p)}(p)} \, \frac{\prod_{i=1}^{n}\mu_i^p}{\prod_{i<j}\left(\mu_i-\mu_j\right)}
    \left|{\bf V(x)}\right| \,
        \left|{\bf G(x,\bm \mu)}\right|
\end{equation}
where $\mu_i$ are the eigenvalues of $\mathbf{\Phi}^{-1}$, all of
multiplicity one here, ${\bf V(x)}$ is the ($p\times p$)
Vandermonde matrix, and the ($m\times m$) matrix ${\bf G(x,\bm
\mu)}$ has elements as follows
\begin{equation} g_{i,j}=\left\{
  \begin{array}{ll}
    e^{-\mu_i x_j} & \qquad j=1,\ldots,p \\
    \mu_i^{n-j} & \qquad j=p+1,\ldots,n 
  \end{array}
  \right.
\end{equation}
that is, the matrix $\bf G(x,\bm\mu)$ is
{
\begin{equation}\label{eq:Gxmudistinct} \mb{G(x,\bm\mu)} \teq
  \left[\begin{array}{cccccccc} 
    e^{-\mu_1 {x}_1} & \cdots & e^{-\mu_1 {x}_{p}} & \mu_1^{n-p-1}& \mu_1^{n-p-2}& \cdots & \mu_1 & 1  \\
    e^{-\mu_2 {x}_1} & \cdots & e^{-\mu_2 {x}_{p}} & \mu_2^{n-p-1}& \mu_2^{n-p-2}& \cdots & \mu_2 & 1  \\
                      &        &                       &                &                &        &           &    \\
    \vdots            &        & \vdots                & \vdots         & \vdots         & \cdots & \vdots    & \\
                      &        &                       &                &                &        &           &    \\
    e^{-\mu_n {x}_1} & \cdots & e^{-\mu_n {x}_{p}} & \mu_n^{n-p-1}& \mu_n^{n-p-2}& \cdots & \mu_n & 1  \\
  \end{array}\right]\,
=
  \left[\begin{array}{c} 
    \mathbf{g}({\bf x}, \mu_1) \\
    \mathbf{g}({\bf x}, \mu_2) \\
\vdots \\
    \mathbf{g}({\bf x}, \mu_n) \\
  \end{array}\right]\,
  .
\end{equation}
}


\subsubsection{Generalization to covariance matrix with arbitrary eigenvalues}

Note that \eqref{eq:jpdfSchianiwinzan} and
\eqref{eq:jpdfSchiani} are only valid for covariance matrices with all
distinct eigenvalues (multiplicity one). So, we must now
generalize these expressions to the case of interest, i.e.,
eigenvalues $\mu_i$ with arbitrary multiplicities. This step is
possible by using Lemma~\ref{lemma: lemma0chiani}.

In fact, we note that in both \eqref{eq:jpdfSchianiwinzan} and
\eqref{eq:jpdfSchiani} we have a ratio of the form
\begin{equation} \label{eq:ratio}
    \frac{\left|{\bf
    G(x,\bm\mu)}\right|}{\prod_{i<j}\left(\mu_i-\mu_j\right)}.
\end{equation}
By using Lemma~\ref{lemma: lemma0chiani}, for each eigenvalue with
multiplicity $m_i$ we must replace the rows of ${\bf G(x,\bm \mu)}$
with their successive derivatives with respect to the eigenvalue,
and divide by $\Gamma_{(m_i)}(m_i)$, obtaining
\begin{equation}\label{eq:Gsulambdamultiplicity}
\frac{\left|{\bf G(x,\bm \mu)}\right|}{\prod_{i<j}\left(\mu_i-\mu_j\right)}\rightarrow
\frac{1}{\prod_i\Gamma_{(m_i)}(m_i) \, \prod_{i<j}\left(\mu_{(i)}-\mu_{(j)}\right)^{m_i m_j}}  \det \left[\begin{array}{c} 
    \mathbf{g}^{(m_1-1)}({\bf x}, \mu_{(1)}) \\
\vdots \\
    \mathbf{g}^{(1)}({\bf x}, \mu_{(1)}) \\
    \mathbf{g}^{(0)}({\bf x}, \mu_{(1)}) \\
\vdots \\
    \mathbf{g}^{(m_L-1)}({\bf x}, \mu_{(L)}) \\
\vdots \\
    \mathbf{g}^{(1)}({\bf x}, \mu_{(L)}) \\
    \mathbf{g}^{(0)}({\bf x}, \mu_{(L)}) \\
  \end{array}\right]\,
\end{equation}
where the row vector $\mathbf{g}^{(l)}({\bf x}, \mu)$ is the $\ith{l}$
derivative of the row $\mathbf{g}({\bf x},\mu)$ in \eqref{eq:jpdfSchianiwinzan} or \eqref{eq:Gxmudistinct}.
The $\ith{j}$ element of $\mathbf{g}^{(l)}({\bf x},\mu)$ is so derived to be
\begin{equation} g^{(l)}_{j}=g^{(l)}_{j}(\mu)=\left\{
  \begin{array}{ll}
    \left(-x_j \right)^l e^{-\mu x_j} & \qquad j=1,\ldots,p \\
    \left[n-j\right]_l \mu^{n-j-l}& \qquad j=p+1,\ldots,n .\\
  \end{array}
  \right. 
\end{equation}
%

\mynote{The relation between the row index, $i$, and the derivative order, $l$,  can be established by introducing the function $e_i$ indicating the eigenvalue $\mu_{(e_i)} \in \left\{\mu_{(1)}, \ldots, \mu_{(L)}\right\}$ to be used in row $i$ of the matrix in the RHS of  \eqref{eq:Gsulambdamultiplicity}. It is easy to verify that $e_i$ is the unique integer such that
$$m_1+\ldots+m_{e_i-1}< i \leq m_1 +\ldots+m_{e_i}. $$
Then, the derivative order for the row $i$ is $l=d_i$, where
$$d_i=\sum_{k=1}^{e_i}m_k -i .$$
Thus, the generic element of the matrix in the RHS of \eqref{eq:Gsulambdamultiplicity} is $g^{(d_i)}_{j}(\mu_{(e_i)})$.

\noindent Combining \eqref{eq:jpdfSchianiwinzan}, \eqref{eq:jpdfSchiani} and \eqref{eq:Gsulambdamultiplicity} we have Lemma  \ref{lemma: pdfquadraticforms}.

}


\section{An identity on multiple integrals involving determinants}
\label{app:identitymult}
\begin{theorem}\label{th:meansum}
Given an arbitrary $p \times p$ matrix ${\bf \Phi}\left({\bf
x}\right)$ with $\ith{ij}$ elements $\Phi_i(x_j)$, an arbitrary $n
\times n$  matrix ${\bf \Psi}\left({\bf x}\right)$, $n\geq p$,
with elements
$$
    \left\{\begin{array}{ll}
    \Psi_i(x_j) & \qquad j=1,\ldots,p \\
    \Psi_{i,j} & \qquad j=p+1,\ldots,n 
  \end{array}
  \right. 
$$
and two arbitrary functions $\xi(\cdot)$ and $\td{\xi}(\cdot)$ the
following identity holds:
\begin{eqnarray}\label{eq:appdxsumunord}
 \hspace{-2.5cm} \int \ldots \int_{\Do}
    \left| {\bf \Phi}\left({\bf x}\right)\right| \cdot
    \left| {\bf \Psi}\left({\bf x}\right)\right|
    \prod_{m=1}^{p} \xi(x_m) \sum_{i=1}^{p} \td{\xi}(x_i) d{\bf x} && \nonumber \\
  &=& \sum_{k=1}^{p} \det \left(\left\{c^{(k)}_{i,j}\right\}_{i,j=1\ldots,n}\right)
\end{eqnarray}
where the multiple integral is over the domain $\Do=\left\{b\geq
x_1 \geq x_2 \geq \ldots \geq x_p \geq a \right\}$ , 
$$c^{(k)}_{i,j}=
    \left\{\begin{array}{ll}
    \int_a^b \Phi_i(x) \Psi_j(x) \; \xi(x) U_{k,j}\left(\td{\xi}(x)\right) dx & j=1,\ldots,p \\
    \Psi_{i,j} & j=p+1,\ldots,n
  \end{array}
  \right. 
$$
and the function
$U_{k,j}(x)$ is defined by
\begin{equation} \label{eq:Ukj}
  U_{k,j}(x) \teq \begin{cases}
                    x & \quad \text{if} \quad k=j \\
                    1 & \quad \text{if} \quad k\neq j .
                \end{cases} \, 
\end{equation}

\end{theorem}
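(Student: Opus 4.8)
The plan is to reduce the nested integral over the ordered domain $\Do$ to a single integral by a standard symmetrization argument, then to recognize the resulting sum of products of single integrals as a Laplace (cofactor) expansion of a determinant. First I would get rid of the ordering constraint: since the integrand $\left|{\bf\Phi}({\bf x})\right|\cdot\left|{\bf\Psi}({\bf x})\right|\prod_{m=1}^{p}\xi(x_m)\sum_{i=1}^{p}\tilde\xi(x_i)$ is a symmetric function of $x_1,\ldots,x_p$ (the product of two determinants each of which changes sign under a transposition of two of the first $p$ variables is symmetric, as is $\prod_m\xi(x_m)$ and the symmetric sum $\sum_i\tilde\xi(x_i)$), the integral over $\Do$ equals $\frac{1}{p!}$ times the integral over the unconstrained box $[a,b]^p$ (with the remaining coordinates $x_{p+1},\ldots,x_n$ not integrated, as they only enter through the constant columns $\Psi_{i,j}$).

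Next I would expand one of the two determinants, say $\left|{\bf\Psi}({\bf x})\right|$, by the Leibniz formula $\sum_{\sigma\in S_n}\operatorname{sgn}(\sigma)\prod_{j=1}^{n}\Psi_{\sigma(j),j}(x_j)$ (with the understanding that for $j>p$ the factor is the constant $\Psi_{\sigma(j),j}$), and similarly expand $\left|{\bf\Phi}({\bf x})\right|=\sum_{\tau\in S_p}\operatorname{sgn}(\tau)\prod_{j=1}^{p}\Phi_{\tau(j),j}(x_j)$. The key point is that after multiplying everything out and integrating over the box, the integral factorizes across the coordinates $x_1,\ldots,x_p$: each $x_j$ with $1\le j\le p$ appears only in $\Phi_{\tau(j),j}(x_j)$, $\Psi_{\sigma(j),j}(x_j)$, $\xi(x_j)$, and possibly in the factor $\tilde\xi(x_j)$ coming from the one term of $\sum_i\tilde\xi(x_i)$ that picks out index $i=j$. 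Splitting the sum $\sum_{i=1}^{p}\tilde\xi(x_i)$ into its $p$ terms indexed by $k$, the $k$-th term contributes $\tilde\xi$ to the $x_k$-integral only; this is precisely the role of the function $U_{k,j}$ in the statement, which inserts the extra $\tilde\xi$-factor in column $j=k$ and leaves the other columns alone. Carrying out the $p$ one-dimensional integrals then produces exactly the entries $c^{(k)}_{i,j}=\int_a^b\Phi_i(x)\Psi_j(x)\,\xi(x)\,U_{k,j}(\tilde\xi(x))\,dx$ for $j\le p$, while the columns $j=p+1,\ldots,n$ carry through unchanged as $\Psi_{i,j}$.

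Finally I would reassemble the double sum over permutations into determinants: fixing $k$, the sum over $\sigma$ and $\tau$ of the signed products of the computed integral entries is, by the Cauchy–Binet–type recombination (equivalently, by running the Leibniz expansion backwards), equal to $\det\left(\{c^{(k)}_{i,j}\}_{i,j=1,\ldots,n}\right)$; the factor $\frac{1}{p!}$ from the symmetrization is absorbed because each of the $p!$ relabelings of $x_1,\ldots,x_p$ gives the same contribution. Summing over $k=1,\ldots,p$ yields the right-hand side of \eqref{eq:appdxsumunord}. The main obstacle — really the only subtle bookkeeping point — is tracking how the single sum $\sum_{i=1}^{p}\tilde\xi(x_i)$ distributes over the permutation expansion so that it lands as a column-$k$ modification rather than a row modification, and verifying that the signs $\operatorname{sgn}(\sigma)$, $\operatorname{sgn}(\tau)$ recombine correctly with the constant block in columns $p+1,\ldots,n$; a clean way to handle this is to note that the map $(\tau,\sigma)\mapsto\sigma\circ(\iota\oplus\tau^{-1}\text{ extended})$ is a bijection absorbing one of the permutation sums, so only a single alternating sum over $S_n$ survives, which is the determinant.
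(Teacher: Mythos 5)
Your proposal is correct and takes essentially the same route as the paper, whose own ``proof'' merely defers to the classical generalized-Andreief argument of \cite[Theorem 3]{ChiWinZan:J03}: symmetrize over the ordered domain (picking up $1/p!$), expand both determinants by the Leibniz formula, split $\sum_i \tilde{\xi}(x_i)$, factorize the one-dimensional integrals, and recombine by reindexing $\sigma$ through the extension of $\tau^{-1}$ so that the extra $\tilde{\xi}$ lands in column $k=\tau(i)$ and the $p\cdot(p-1)!$ relabelings cancel the $1/p!$. The recombination step you flag as the only subtle point is indeed where the constant columns $j=p+1,\ldots,n$ (the extension beyond the square case of \cite{ChiWinZan:J03}) must be checked to pass through unchanged, and your bijection argument handles it correctly.
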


\bigskip

\begin{proof}
As this theorem is an extension of \cite[Theorem 3]{ChiWinZan:J03}, it is sufficient for the proof to follow the same steps reported there.
\end{proof}

\end{appendices}

\section*{Acknowledgments}
The authors would like to thank M. Cicognani, the Associate Editor A. Goldsmith and the anonymous reviewers for their useful comments.


\bibliographystyle{IEEEtran}


%

\begin{figure}[p]
\psfrag{SNR (dB)}{\small SNR (dB)}
\psfrag{E[C]}{\small \hspace{-2cm}Ergodic mutual information  (bits/s/Hz)}
\centerline{\includegraphics[width=0.8\linewidth,draft=false]{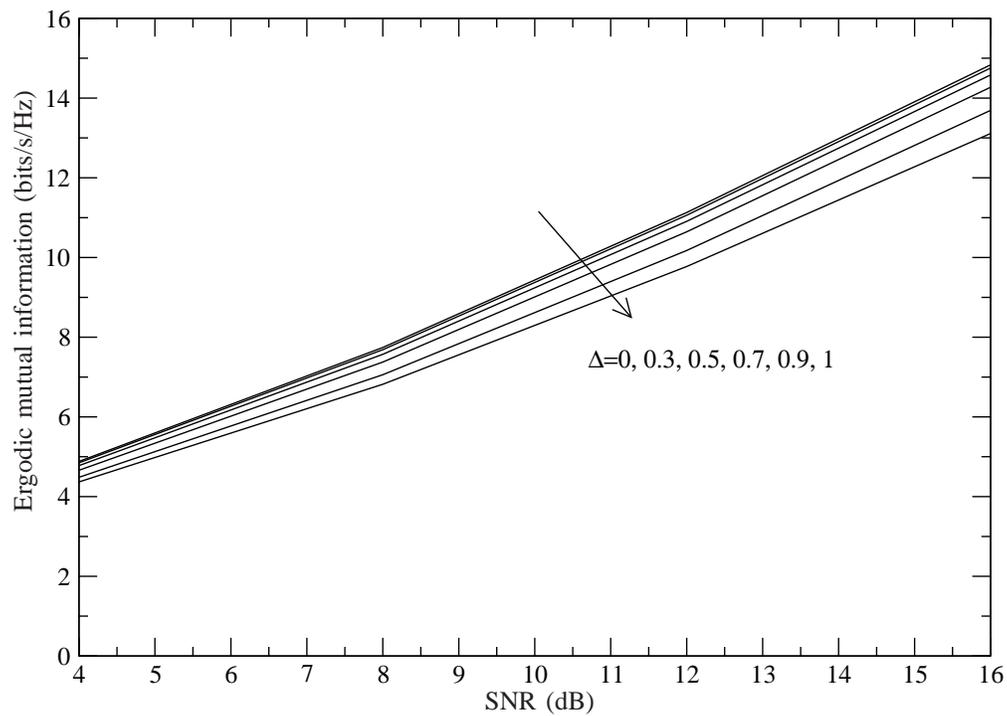}}
\caption{Ergodic mutual information for single-user MIMO systems as a function
of \ac{SNR} over Rayleigh uncorrelated fading with $\nt=6$,
$\nr=3$. Half of the antennas with power (normalized) $1+\Delta$,
the others with $1-\Delta$, i.e., with transmitted power levels
(normalized) equal to $\{1+\Delta, 1+\Delta, 1+\Delta, 1-\Delta,
1-\Delta, 1-\Delta \}$.} \label{fig:ECnt6nr3nq2}
\end{figure}

\newpage


\begin{figure}[p]
\psfrag{SNR}{\small SNR (dB)}
\psfrag{Cu}{\small \hspace{-1cm} $C_u$ (bits/s/Hz)}
\centerline{\includegraphics[width=0.8\linewidth,draft=false]{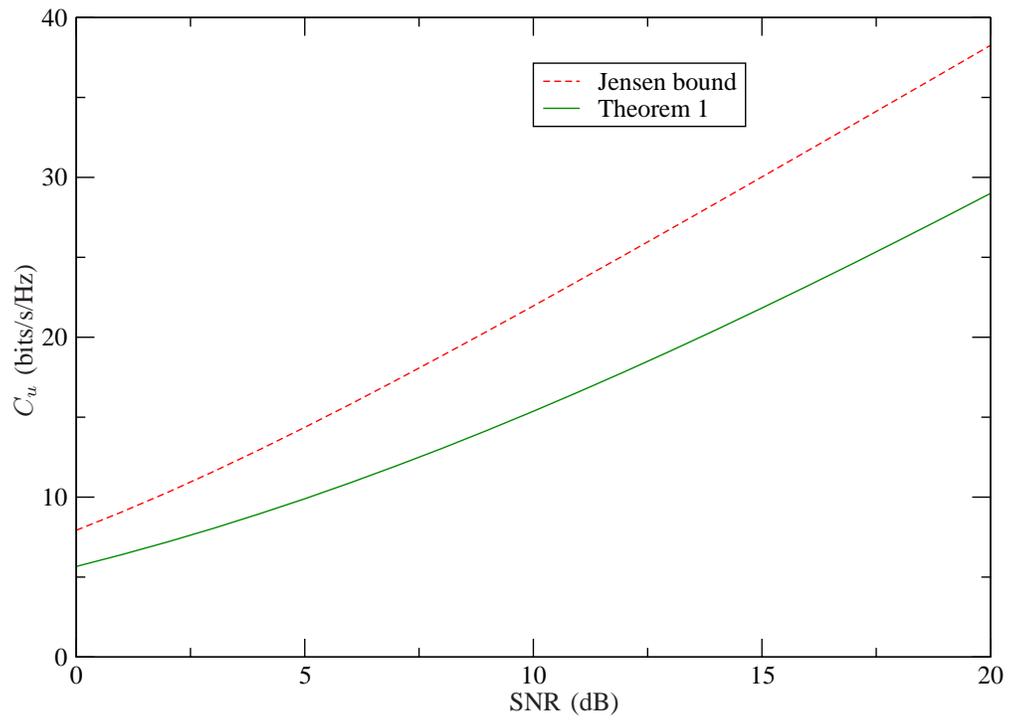}}
\caption{
\mynote{
Bound on the network capacity for MIMO relay networks. Source with $4$ antennas, $5$ relays with $2$ antennas each, power levels per relay proportional to $\{1,2,5,10,20\}$.} 
}
\label{fig:MIMOrelay}
\end{figure}

\newpage

\begin{figure}[p]
\psfrag{E[C]}{\small \hspace{-2cm}Ergodic mutual information  (bits/s/Hz)}
\psfrag{SIR}{\small SIR (dB)}
\centerline{\includegraphics[width=0.8\linewidth,draft=false]{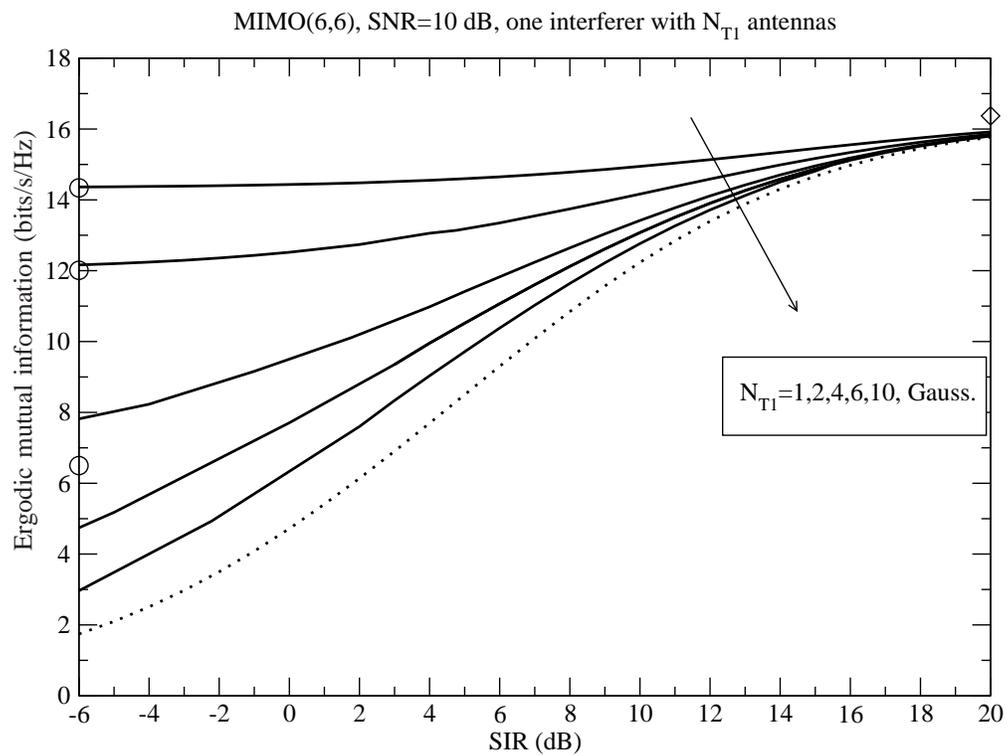}}
\caption{Ergodic mutual information for MIMO-(6,6) as a function of \ac{SIR} in
the presence of one MIMO cochannel interferer with
$\Nt_1=1,2,4,6,10$. The \ac{SNR} is set to 10 dB. The
Gaussian approximation of the interference is also shown. Diamond:
capacity of a single-user MIMO-$(6,6)$. Circles: capacity of a
single-user MIMO-$(6,6-\Nt_1)$ (only for $\Nt_1=1,2,4$).}
\label{figuraMIMOCCISNR10dbNR6.eps}
\end{figure}

\newpage

\begin{figure}[p]
\psfrag{E[C]}{\small \hspace{-2cm}Ergodic mutual information  (bits/s/Hz)}
\centerline{\includegraphics[width=0.8\linewidth,draft=false]{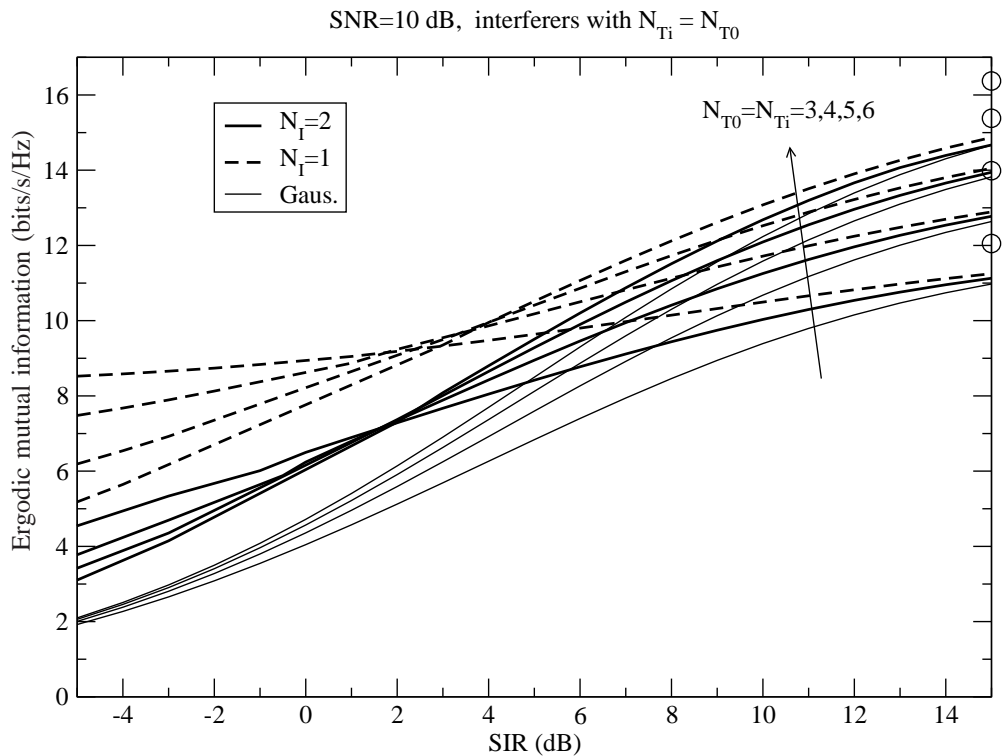}}
\caption{Ergodic mutual information as a function of the signal-to-total
interference ratio. MIMO system with $\Nr=6$ receiving antenna,
\textsf{SNR} = 10 dB. The Gaussian approximation of the interference is also shown. Scenario with one and two interferers, each
with the same number of transmitting antennas as the desired user.
Cases of $3,4,5$ and $6$ transmitting antennas. Circles: capacity
of single-user MIMO-$(\Nt_0,\Nr)$.}
\label{figuraMIMOnR6CCISNR10dbONEandTWOandGAUSINTERF.eps}
\end{figure}

\clearpage

\begin{biography}
{Marco Chiani}  (M'94--SM'02) was born in Rimini, Italy, in April 1964.  He received the Dr. Ing. degree (magna cum laude) in Electronic Engineering and the Ph.D. degree in Electronic and Computer Science from the University of Bologna in 1989 and 1993, respectively.  Dr. Chiani is a Full Professor at the II Engineering Faculty, University of Bologna, Italy, where he is the Chair in Telecommunication.  During the summer of 2001 he was a Visiting Scientist at AT\&T Research Laboratories in Middletown, NJ.  He is a frequent visitor at the Massachusetts Institute of Technology (MIT), where he presently holds a Research Affiliate appointment.

Dr. Chiani's research interests include wireless communication systems, MIMO systems, wireless multimedia, low density parity check codes (LDPCC) and UWB. He is leading the research unit of University of Bologna on cognitive radio and UWB (European project EUWB), on Joint Source and Channel Coding for wireless video (European projects Phoenix-FP6 and Optimix-FP7), and is a consultant to the European Space Agency (ESA-ESOC) for the design and evaluation of error correcting codes based on LDPCC for space CCSDS applications.  

Dr. Chiani has chaired, organized sessions and served on the Technical Program Committees at several IEEE International Conferences. In January 2006 he received the ICNEWS award "For Fundamental Contributions to the Theory and Practice of Wireless Communications". He was the recipient of the 2008 IEEE ComSoc Radio Communications Committee Outstanding Service Award.

He is the past chair (2002-2004) of the Radio Communications Committee of the IEEE Communication Society and past Editor of Wireless Communication (2000-2007) for the {\scshape IEEE Transactions on Communications}.   
\end{biography}

\begin{biography}
{\bf Moe Z. Win} (S'85-M'87-SM'97-F'04)
received both the Ph.D.\ in Electrical Engineering and M.S.\ in Applied Mathematics
as a Presidential Fellow at the University of Southern California (USC) in 1998.
He received an M.S.\ in Electrical Engineering from USC in 1989, and a B.S.\ ({\em magna cum laude})
in Electrical Engineering from Texas A\&M University in 1987.

Dr.\ Win is an Associate Professor at the Massachusetts Institute of Technology (MIT).
Prior to joining MIT, he was at AT\&T Research
Laboratories for five years and at the Jet Propulsion Laboratory for seven years.
His research encompasses developing fundamental theories, designing algorithms, and
conducting experimentation for a broad range of real-world problems.
His current research topics include location-aware networks,
time-varying channels, multiple antenna systems, ultra-wide bandwidth
systems, optical transmission systems, and space communications systems.

Professor Win is an IEEE Distinguished Lecturer and
        elected Fellow of the IEEE, cited for ``contributions to wideband wireless transmission.''
He was honored with
        the IEEE Eric E. Sumner Award (2006), an IEEE Technical Field Award for
        ``pioneering contributions to ultra-wide band communications science and technology.''
Together with students and colleagues, his papers have received several awards including
        the IEEE Communications Society's Guglielmo Marconi Best Paper Award (2008)
    and the IEEE Antennas and Propagation Society's Sergei A. Schelkunoff Transactions Prize Paper Award (2003).
His other recognitions include
        the Laurea Honoris Causa from the University of Ferrara, Italy (2008),
        the Technical Recognition Award of the IEEE ComSoc Radio Communications Committee (2008),
        Wireless Educator of the Year Award (2007),
        the Fulbright Foundation Senior Scholar Lecturing and Research Fellowship (2004),
        the U.S. Presidential Early Career Award for Scientists and Engineers (2004),
        the AIAA Young Aerospace Engineer of the Year (2004),
    and the Office of Naval Research Young Investigator Award (2003).

Professor Win has been actively involved in organizing and chairing
a number of international conferences. He served as
    the Technical Program Chair for
        the IEEE Wireless Communications and Networking Conference in 2009,
        the IEEE Conference on Ultra Wideband in 2006,
        the IEEE Communication Theory Symposia of ICC-2004 and Globecom-2000,
        and
        the IEEE Conference on Ultra Wideband Systems and Technologies in 2002;
    Technical Program Vice-Chair for
        the IEEE International Conference on Communications in 2002; and
    the Tutorial Chair for
        ICC-2009 and
        the IEEE Semiannual International Vehicular Technology Conference in Fall 2001.
He was
    the chair (2004-2006) and secretary (2002-2004) for
        the Radio Communications Committee of the IEEE Communications Society.
Dr.\ Win is currently
    an Editor for {\scshape IEEE Transactions on Wireless Communications.}
He served as
    Area Editor for {\em Modulation and Signal Design} (2003-2006),
    Editor for {\em Wideband Wireless and Diversity} (2003-2006), and
    Editor for {\em Equalization and Diversity} (1998-2003),
        all for the {\scshape IEEE Transactions on Communications}.
He was Guest-Editor
        for the
        {\scshape Proceedings of the IEEE}
        (Special Issue on UWB Technology \& Emerging Applications) in 2009 and
        {\scshape IEEE Journal on Selected Areas in Communications}
        (Special Issue on Ultra\thinspace-Wideband Radio in Multiaccess
        Wireless Communications) in 2002.
\end{biography}

\begin{biography}
{Hyundong Shin} (S'01--M'04) received the B.S. degree in
Electronics Engineering from Kyung Hee University, Korea, in 1999,
and the M.S. and Ph.D. degrees in Electrical Engineering from
Seoul National University, Seoul, Korea, in 2001 and 2004,
respectively.

From September 2004 to February 2006, Dr.\ Shin was a Postdoctoral
Associate at the Laboratory for Information and Decision Systems
(LIDS), Massachusetts Institute of Technology (MIT), Cambridge,
MA, USA. In March 2006, he joined the faculty of the School of
Electronics and Information, Kyung Hee University, Korea, where he
is now an Assistant Professor at the Department of Electronics and
Radio Engineering. His research interests include wireless
communications, information and coding theory,
cooperative/collaborative communications, and multiple-antenna
wireless communication systems and networks.

Professor Shin served as a member of the Technical Program
Committee in
    the IEEE International Conference on Communications (2006, 2009),
    the IEEE International Conference on Ultra Wideband (2006),
    the IEEE Global Communications Conference (2009, 2010),
    the IEEE Vehicular Technology Conference (2009 Fall, 2010 Spring),
    and
    the IEEE International Symposium on Personal, Indoor and Mobile Communications (2009).
He served as a Technical Program co-chair for
    the IEEE Wireless Communications and Networking Conference PHY Track (2009).
Dr.\ Shin is currently
    an Editor for {\scshape IEEE Transactions on Wireless Communications}.
He was a Guest Editor for
    the 2008 {\scshape EURASIP Journal on Advances in Signal Processing} (Special Issue on Wireless
    Cooperative Networks).

Professor Shin received
    the IEEE Communications Society's Guglielmo Marconi Prize Paper Award (2008)
    and
    the IEEE Vehicular Technology Conference Best Paper Award
    (2008 Spring).

\end{biography}

\end{document}